\title{Quantifying Attacker Capability Via Model Checking Multiple Properties\\(Extended Version)}
\author{Eric Rothstein-Morris \and Sun Jun}
\institute{Singapore University of Technology and Design }
\DeclareMathAlphabet{\mathpzc}{OT1}{pzc}{m}{it}
\newcommand{\vect}[1]{\ensuremath{\overrightarrow{\bm{#1}}}}
\newcommand{\Verify}{{\AsFunction{Verify}}}
\newcommand{\addIt}{{\AsFunction{add}}}
\newcommand{\crtu}{{\AsFunction{ctu}}}
\newcommand{\TheCounterexample}{{c_{ex}}}
\newcommand{\shift}{{\AsFunction{shift}}}
\newcommand{\id}{{\AsFunction{id}}}
\newcommand{\AsRule}[1]{{{\textsc{#1}}}}
\newcommand{\AsFunction}[1]{{\mathtt{#1}}}
\newcommand{\True}{\mathtt{tt}}
\newcommand{\False}{\mathtt{ff}}
\newcommand{\Observation}{\downarrow\!}
\newcommand{\Pack}{\AsFunction{{pack}}}
\newcommand{\Powerset}{{\mathscr{P}}}
\newcommand{\FinitePowerset}{{\Powerset_\omega}}
\newcommand{\TheSetOfVariables}{\AsFunctor{V}}
\newcommand{\TheVariable}{\textsc{v}}
\newcommand{\TheSetOfFormulae}{{\mathbb{L}}}
\newcommand{\TheSetOfSafetyFormulae}{{L}}
\newcommand{\TheSetOfClosedAndGuardedSafetyFormulae}{{\TheSetOfSafetyFormulae^{c}_{g}}}
\newcommand{\TheSetOfClosedAndGuardedFormulae}{{\TheSetOfFormulae^{c}_{g}}}
\newcommand{\SomeElement}[1]{
	{
	\ifthenelse{\equal{#1}{1}}
		{{x}}
		{\ifthenelse{\equal{#1}{2}}
		{{y}}
		{\ifthenelse{\equal{#1}{3}}
		{{z}}
		{{{#1}}}}}
	}
}
\newcommand{\TheElement}{{\SomeElement{1}}}
\newcommand{\AnotherElement}{{\SomeElement{2}}}
\newcommand{\SomeState}[1]{
	\mathtt{
	\ifthenelse{\equal{#1}{1}}
		{{s}}
		{\ifthenelse{\equal{#1}{2}}
		{{s'}}
		{\ifthenelse{\equal{#1}{3}}
		{{s''}}
		{{{#1}}}}}
	}
}
\newcommand{\SomeSet}[1]{
	{\ifthenelse{\equal{#1}{1}}
		{X}
		{\ifthenelse{\equal{#1}{2}}
		{Y}
		{\ifthenelse{\equal{#1}{3}}
		{Z}
		{{#1}}}}
	}
}
\newcommand{\TheSet}{{\SomeSet{1}}}
\newcommand{\OtherSet}{{\SomeSet{2}}}
\newcommand{\AnotherSet}{{\SomeSet{3}}}
\newcommand{\AsSet}[1]{{{\mathbf{#1}}}}
\newcommand{\ThePowersetOf}[1]{{\Powerset\!\left({#1}\right)}}
\newcommand{\TheFinitePowersetOf}[1]{{\FinitePowerset\left(#1\right)}}
\newcommand{\SomePredicate}[1]{
	{\ifthenelse{\equal{#1}{1}}
		{{{P}}}
		{\ifthenelse{\equal{#1}{2}}
		{{{Q}}}
		{\ifthenelse{\equal{#1}{3}}
		{{{R}}}
		{{#1}}}}
	}
}
\newcommand{\ThePredicate}{{\SomePredicate{1}}}
\newcommand{\OtherPredicate}{{\SomePredicate{2}}}
\newcommand{\TheGuardedFormula}{\psi}
\newcommand{\SomeFormula}[1]{
	{\ifthenelse{\equal{#1}{1}}
		{\Phi}
		{\ifthenelse{\equal{#1}{2}}
		{{\phi}}
		{\ifthenelse{\equal{#1}{3}}
		{{\AsRule{H}}}
		{{#1}}}}
	}
}
\newcommand{\TheFormula}{{\SomeFormula{2}}}
\newcommand{\TheSafetyFormula}{{\psi}}
\newcommand{\After}[1]{{\left[#1\right]}}
\newcommand{\StrongAfter}[1]{{\left<#1\right>}}
\newcommand{\Globally}{{\square}}
\newcommand{\Always}{{\Globally}}
\newcommand{\Finally}{{\lozenge}}
\newcommand{\Eventually}{{\Finally}}
\newcommand{\SomeRelation}[1]{
	{\ifthenelse{\equal{#1}{1}}
		{{R}}
		{\ifthenelse{\equal{#1}{2}}
		{{S}}
		{\ifthenelse{\equal{#1}{3}}
		{{T}}
		{{#1}}}}
	}
}
\newcommand{\TheRelation}{{\SomeRelation{1}}}
\newcommand{\SomeFunction}[1]{
	{\ifthenelse{\equal{#1}{1}}
		{f}
		{\ifthenelse{\equal{#1}{2}}
		{g}
		{\ifthenelse{\equal{#1}{3}}
		{h}
		{{#1}}}}
	}
}
\newcommand{\TheFunction}{\SomeFunction{1}}
\newcommand{\OtherFunction}{\SomeFunction{2}}
\newcommand{\SomeCategory}[1]{
	{\ifthenelse{\equal{#1}{1}}
		{{\mathbf{C}}}
		{\ifthenelse{\equal{#1}{2}}
		{{\mathbf{D}}}
		{\ifthenelse{\equal{#1}{3}}
		{{\mathbf{E}}}
		{{\mathbf{#1}}}}}
	}
}
\newcommand{\AsFunctor}[1]{\mathpzc{#1}}
\newcommand{\SomeFunctor}[1]{
	{
	\ifthenelse{\equal{#1}{1}}
		{\AsFunctor{F}}
		{\ifthenelse{\equal{#1}{2}}
		{\AsFunctor{G}}
		{\ifthenelse{\equal{#1}{3}}
		{\AsFunctor{H}}
		{\AsFunctor{{#1}}}}}
	}
}
\newcommand{\TheFunctor}{{\SomeFunctor{1}}}
\newcommand{\Deterministic}{{\SomeFunctor{D}}}
\newcommand{\NonDeterministic}{{\SomeFunctor{N}}}
\newcommand{\SomeMonad}[1]{
	{
	\ifthenelse{\equal{#1}{1}}
		{\AsFunctor{T}}
		{\ifthenelse{\equal{#1}{2}}
		{\AsFunctor{U}}
		{\ifthenelse{\equal{#1}{3}}
		{\AsFunctor{V}}
		{\AsFunctor{{#1}}}}}
	}
}
\newcommand{\AsAlgebra}[1]{
	{\mathfrak{{#1}}}
}
\newcommand{\SomeAlgebra}[1]
{
	{\ifthenelse{\equal{#1}{1}}
	{\AsAlgebra{A}}
	{\ifthenelse{\equal{#1}{2}}
		{\AsAlgebra{B}}
		{\ifthenelse{\equal{#1}{3}}
			{\AsAlgebra{C}}
			{\PackageWarning{Preamble}{non-standard algebra symbol}#1}
		}
	}
	}
}
\newcommand{\AsCoalgebra}[1]{
	{\mathbb{{#1}}}
}
\newcommand{\SomeCoalgebra}[1]{
	{\AsCoalgebra{\MakeUppercase{\SomeSet{#1}}}}
}
\newcommand{\TheCoalgebra}{{\SomeCoalgebra{1}}}
\newcommand{\AnotherCoalgebra}{{\SomeCoalgebra{2}}}
\newcommand{\SomeInput}[1]
{
	{\ifthenelse{\equal{#1}{1}}
	{i}
	{\ifthenelse{\equal{#1}{2}}
		{j}
		{\ifthenelse{\equal{#1}{3}}
			{k}
			{\PackageWarning{Preamble}{non-standard input symbol}#1}
		}
	}
	}
}
\newcommand{\TheInput}{{\SomeInput{1}}}
\newcommand{\SomeSetOfInputs}[1]
{
	\ifthenelse{\equal{#1}{1}}
	{{I}}
	{\ifthenelse{\equal{#1}{2}}
		{B}
		{\ifthenelse{\equal{#1}{3}}
			{C}
			{\PackageWarning{Preamble}{non-standard set of inputs symbol}#1}
		}
	}
}
\newcommand{\TheSetOfInputs}{{\SomeSetOfInputs{1}}}
\newcommand{\SomeSequenceOfInputs}[1]{
{
	{\ifthenelse{\equal{#1}{1}}
	{w}
	{\ifthenelse{\equal{#1}{2}}
		{u}
		{\ifthenelse{\equal{#1}{3}}
			{v}
			{\PackageWarning{Preamble}{non-standard sequence of inputs symbol}#1}
		}
	}
}}
}
\newcommand{\TheSequenceOfInputs}{{\SomeSequenceOfInputs{1}}}
\newcommand{\TheInputSequence}{\TheSequenceOfInputs}
\newcommand{\SomeSetOfSequencesOfInputs}[1]{\SomeSetOfInputs{#1}^{*}}
\newcommand{\SomeStreamOfInputs}[1]{
	\ifthenelse{\equal{#1}{1}}
		{\sigma}
		{\ifthenelse{\equal{#1}{2}}
		{\tau}
		{\ifthenelse{\equal{#1}{3}}
		{\rho}
		{{#1}}}}
	}
\newcommand{\SomeSetOfStreamsOfInputs}[1]{\SomeSetOfInputs{#1}^{\nat}}
\newcommand{\SomeOutput}[1]
{
	\ifthenelse{\equal{#1}{1}}
	{o}
	{\ifthenelse{\equal{#1}{2}}
		{p}
		{\ifthenelse{\equal{#1}{3}}
			{q}
			{\PackageWarning{Preamble}{non-standard output symbol}#1}
		}
	}
}
\newcommand{\TheOutputOf}[1]{{\TheOutputFunction(#1)}}
\newcommand{\TheOutputFunction}{{\theta}}
\newcommand{\TheOutputIs}[1]{\Observation\left(#1\right)}
\newcommand{\SomeSetOfOutputs}[1]
{
	\ifthenelse{\equal{#1}{1}}
	{{O}}
	{\ifthenelse{\equal{#1}{2}}
		{P}
		{\ifthenelse{\equal{#1}{3}}
			{Q}
			{\PackageWarning{Preamble}{non-standard set of outputs symbol}#1}
		}
	}
}
\newcommand{\TheSetOfOutputs}{{\SomeSetOfOutputs{1}}}
\newcommand{\SomeSequenceOfOutputs}[1]
{
	\ifthenelse{\equal{#1}{1}}
	{\underline{w}}
	{\ifthenelse{\equal{#1}{2}}
		{\underline{u}}
		{\ifthenelse{\equal{#1}{3}}
			{\underline{v}}
			{\PackageWarning{Preamble}{non-standard sequence of outputs symbol} \underline{#1}}
		}
	}
}
\newcommand{\SomeSetOfSequencesOfOutputs}[1]{\SomeSetOfOutputs{#1}^{*}}
\newcommand{\SomeException}[1]
{
	\ifthenelse{\equal{#1}{1}}
	{e}
	{\ifthenelse{\equal{#1}{2}}
		{d}
		{\ifthenelse{\equal{#1}{3}}
			{c}
			{\PackageWarning{Preamble}{non-standard Exception symbol}#1}
		}
	}
}
\newcommand{\SomeSetOfExceptions}[1]
{
	\ifthenelse{\equal{#1}{1}}
	{E}
	{\ifthenelse{\equal{#1}{2}}
		{F}
		{\ifthenelse{\equal{#1}{3}}
			{G}
			{\PackageWarning{Preamble}{non-standard set of exceptions symbol}#1}
		}
	}
}
\newcommand{\AsSemiring}[1]{\AsCoalgebra{#1}}
\newcommand{\SomeSemiring}[1]{\AsSemiring{\MakeUppercase{\SomeSet{#1}}}}
\newcommand{\SomeFPS}[1]{
	\ifthenelse{\equal{#1}{1}}
		{\sigma}
		{\ifthenelse{\equal{#1}{2}}
		{\gamma}
		{\ifthenelse{\equal{#1}{3}}
		{\rho}
		{{#1}}}}
	}
\newcommand{\SomeBehaviour}[1]{{\SomeFPS{#1}}}
\newcommand{\TheBehaviour}{{\SomeBehaviour{1}}}
\newcommand{\TheBehaviourOf}[1]{{{\llbracket#1\rrbracket}}}
\newcommand{\AsSequence}[1]{{{\left(#1\right)}}}
\newcommand{\da@rightarrow}{\mathchar"0\hexnumber@\symAMSa 4B }
\newcommand{\da@leftarrow}{\mathchar"0\hexnumber@\symAMSa 4C }
\newcommand{\xdashedrightarrow}[2][]{%
  \mathrel{%
    \mathpalette{\da@xarrow{#1}{#2}{}\da@rightarrow{\,}{}}{}%
  }%
}
\newcommand{\xdashedleftarrow}[2][]{%
  \mathrel{%
    \mathpalette{\da@xarrow{#1}{#2}\da@leftarrow{}{}{\,}}{}%
  }%
}
\newcommand{\da@xarrow}[7]{%
  \sbox0{$\ifx#7\scriptstyle\scriptscriptstyle\else\scriptstyle\fi#5#1#6\m@th$}%
  \sbox2{$\ifx#7\scriptstyle\scriptscriptstyle\else\scriptstyle\fi#5#2#6\m@th$}%
  \sbox4{$#7\dabar@\m@th$}%
  \dimen@=\wd0 %
  \ifdim\wd2 >\dimen@
    \dimen@=\wd2 %
  \fi
  \count@=2 %
  \def\da@bars{\dabar@\dabar@}%
  \@whiledim\count@\wd4<\dimen@\do{%
    \advance\count@\@ne
    \expandafter\def\expandafter\da@bars\expandafter{%
      \da@bars
      \dabar@ 
    }%
  }%
  \mathrel{#3}%
  \mathrel{%
    \mathop{\da@bars}\limits
    \ifx\\#1\\%
    \else
      _{\copy0}%
    \fi
    \ifx\\#2\\%
    \else
      ^{\copy2}%
    \fi
  }%
  \mathrel{#4}%
}
\providecommand*{\twoheadrightarrowfill@}{%
  \arrowfill@\relbar\relbar\twoheadrightarrow
}
\providecommand*{\twoheadleftarrowfill@}{%
  \arrowfill@\twoheadleftarrow\relbar\relbar
}
\providecommand*{\xtwoheadrightarrow}[2][]{%
  \ext@arrow 0579\twoheadrightarrowfill@{#1}{#2}%
}
\providecommand*{\xtwoheadleftarrow}[2][]{%
  \ext@arrow 5097\twoheadleftarrowfill@{#1}{#2}%
}
\newcolumntype{L}{> {$}l <{$} } 
\newcolumntype{R}{> {$}r <{$} } 
\begin{document}
\maketitle
\begin{abstract}
This work aims to solve a practical problem, i.e., how to quantify the risk brought upon a system by different attackers. The answer is useful for optimising resource allocation for system defence. Given a set of safety requirements, we quantify the attacker capability in terms of the set of safety requirements an attacker can compromise. Given a system (in the presence of an attacker), model checking it against each safety requirement one by  one is expensive and wasteful (since the same state space is explored many times). We thus propose model checking multiple properties efficiently by means of coalgebraic model checking using {enhanced coinduction} techniques.  We apply the proposed technique to a real-world water treatment system and the results show that our approach can effectively reduce the effort required for model checking. 
\end{abstract}

\section{Introduction}
Having an understanding of the capabilities of attackers helps us decide how to use our security resources in the most efficient way. Consider, for example, a water treatment plant; it is unwise to put all efforts on physical security if an attacker can hack into the controller through the internet and can then damage the physical components by wrongful manipulation of the controller. 

Attacker models often include a list of capabilities available to the attacker that allow her to interact and interfere with the target system. Basin and Cremers \cite{KnowYourEnemy} classify attackers (each with an associated list of adversary actions) based on whether each attacker can violate a given security property during a protocol run using her available actions. Our view is that the capabilities of powerful attackers are ultimately related to  failed security requirements (e.g., an attacker that can reveal the state of the system violates the property that the state of the system has to remain secret). In this sense, given a set of basic security requirements, we could classify attackers based on how many of these properties are violated.

In the case of CPSs, safety requirements describe which ranges are considered safe for the different elements of the plant, usually in the form of invariants (e.g, an invariant that defines the safe value readings for a pressure sensor). For a given attacker and a given set of safety properties, we can model check each safety property on the system under attack to quantify the attacker. However, this approach is wasteful since we explore the state space many times (once per property), and checking the product of all properties does not explicitly separate those properties that are violated from those that are not. Moreover, for CPSs, some of these safety properties are related by physical relationships (e.g., the pressure inside a water tank is related to the water level). In these cases, we would like to automatically infer the satisfaction/failure of related properties from previous verifications with the goal to reduce the effort required to check multiple properties. 

The following research questions motivate our work: \textbf{(RQ1)} how can we reduce the effort of verifying multiple properties for the quantification of attacker capabilities? and \textbf{(RQ2)} how do we quantify the attackers using the verification results? To answer RQ1, we need to show that the number of states or time which is required to verify the properties one by one can be reduced when verifying multiple properties, and to answer RQ2, we show the comparison between these attackers and justify that the results are meaningful. 

Our contributions are: 
$\bullet$ A formalisation of the problem of attacker quantification from the perspective of model checking multiple properties. 
$\bullet$ A model checking algorithm for safety properties which uses enhanced coinduction \cite{EnhancedCoalgebraicBisimulation,EnhancedCoinduction} to speed up the verification of multiple properties; this speed-up is the result of both exploiting algebraic properties of the state space, and by reusing knowledge obtained in previous verifications.

\section{Preliminaries}
\label{sec:preliminaries}
In this section, we provide the formalisms necessary to precisely present the problem of attacker quantification.

A \emph{deterministic system} with inputs in a set $\TheSetOfInputs$ and observations in a set $\TheSetOfOutputs$ is a tuple $(\TheSet,\TheOutputFunction_\Deterministic,\delta_\Deterministic)$, where $\TheSet$ is the set of states equipped with an \emph{observation} function $\TheOutputFunction_\Deterministic\colon \TheSet \rightarrow {\TheSetOfOutputs}$ and a \emph{transition} function $\delta_\Deterministic\colon\TheSet \times\TheSetOfInputs\rightarrow \TheSet$. Usually, deterministic systems have a distinguished initial state $\TheElement_0\in \TheSet$. A \emph{nondeterministic system} with inputs in $\TheSetOfInputs$ and observations in $\TheSetOfOutputs$ is a tuple $(\TheSet,\TheOutputFunction_\NonDeterministic,\delta_\NonDeterministic)$ where, again, $\TheSet$ is a set of states equipped with an {observation} function $\TheOutputFunction_\NonDeterministic\colon \TheSet \rightarrow {\TheSetOfOutputs}$, but this time the {transition} function has the signature $\delta_\NonDeterministic\colon\TheSet\times\TheSetOfInputs \rightarrow \TheFinitePowersetOf{\TheSet}$, where $\TheFinitePowersetOf{\TheSet}$ is the set of finite subsets of $\TheSet$. Nondeterministic systems usually have a finite set of initial states $\TheSet_0\subseteq\TheSet$ instead of a single initial state. 

\emph{$\TheFunctor$-coalgebras}  are a general framework which allows us to model both deterministic and non-deterministic systems as systems of the same type \cite{UniversalCoalgebra}. 
\begin{definition}[$\TheFunctor$-coalgebras]
\label{def:FCoalgebra}
An $\TheFunctor$-\emph{coalgebra} $\TheCoalgebra$ is a tuple $(\TheSet,\TheOutputFunction,\delta)$ where $\TheSet$ is the set of states, equipped with an \emph{observation} function $\TheOutputFunction\colon \TheSet \rightarrow \ThePowersetOf{\TheSetOfOutputs}$ and a {transition} function $\delta\colon\TheSet \rightarrow \TheSet^\TheSetOfInputs$. 
\end{definition}
For a deterministic system $(\TheSet,\TheOutputFunction_\Deterministic,\delta_\Deterministic)$, its corresponding $\TheFunctor$-coalgebra is $(\TheSet,\TheOutputFunction,\delta)$, where $\TheOutputOf{\TheElement}\triangleq\set{\TheOutputFunction_\Deterministic(\TheElement)}$ and $\delta\triangleq \delta_\Deterministic$. For a nondeterministic system $(\TheSet,\TheOutputFunction_\NonDeterministic,\delta_\NonDeterministic)$, its corresponding $\TheFunctor$-coalgebra is $(\TheFinitePowersetOf{\TheSet},\TheOutputFunction,\delta)$, where $\TheOutputFunction(\OtherSet)\triangleq\set{\TheOutputFunction_\Deterministic(\TheElement)|\TheElement\in \OtherSet}$ and $\delta(\OtherSet)\triangleq\cup\set{\delta_\NonDeterministic(\TheElement)|\TheElement\in \OtherSet}$, for $\OtherSet\subseteq \TheSet$.

The transition function $\delta$ of $\TheFunctor$-coalgebras can be iterated to yield the function $\delta^{*}\colon\TheSet\rightarrow \TheSet^{\TheSetOfInputs^{*}}$, which is inductively defined, for $\TheElement\in \TheSet$, $\TheInput\in \TheSetOfInputs$ and $\TheSequenceOfInputs\in \TheSetOfInputs^{*}$, by
\begin{align}
\delta^{*}(\TheElement)(\varepsilon)\triangleq \TheElement,\quad~\text{and}~\quad
\delta^{*}(\TheElement)(\TheInput:\TheSequenceOfInputs)\triangleq \delta^{*}(\delta(\TheElement)(\TheInput))(\TheSequenceOfInputs),
\end{align}
where $\varepsilon$ is the empty sequence, and $:$ prepends $\TheInput$ to the sequence $\TheInputSequence$. 

\begin{definition}[Observable Behaviours]
\label{def:ObservableBehaviour}
Let $\ThePowersetOf{\TheSetOfOutputs}^{\TheSetOfInputs^{*}}$ be the set of functions that map a finite sequence of inputs $\TheInputSequence \in \TheSetOfInputs^{*}$ to a set of observations. Given an $\TheFunctor$-coalgebra $\TheCoalgebra=\AsSequence{\TheSet,\theta,\delta}$, the mapping $\TheBehaviourOf{-}_\TheCoalgebra\colon \TheSet\rightarrow \ThePowersetOf{\TheSetOfOutputs}^{\TheSetOfInputs^{*}}$ maps $\TheElement\in \TheSet$ to its \emph{observable behaviour} $\TheBehaviourOf{\TheElement}_\TheCoalgebra\in\ThePowersetOf{\TheSetOfOutputs}^{\TheSetOfInputs^{*}}$, defined, for $\TheInput \in \TheSetOfInputs$ and $\TheInputSequence \in \TheSetOfInputs^{*}$, by
\begin{align}
\TheBehaviourOf{\TheElement}_\TheCoalgebra(\varepsilon)\triangleq \TheOutputOf{\TheElement},\quad\text{ and }\quad
\TheBehaviourOf{x}_\TheCoalgebra(i:\TheInputSequence)\triangleq \TheBehaviourOf{\delta(x)(i)}_\TheCoalgebra(\TheInputSequence).
\end{align}
We usually refer to $\TheBehaviourOf{-}_\TheCoalgebra$ simply as $\TheBehaviourOf{-}$ when the $\TheFunctor$-coalgebra $\TheCoalgebra$ is clear from the context.
\end{definition}
We quantify the capabilities of the attacker by determining which security properties can be violated by using the attacks available to him/her. We model attacks as functions that change the \emph{behaviours} of an $\TheFunctor$-coalgebra. 
\begin{definition}[Attacks and Attackers]
Any function $\alpha\colon \ThePowersetOf{\TheSetOfOutputs}^{\TheSetOfInputs^{*}} \rightarrow \ThePowersetOf{\TheSetOfOutputs}^{\TheSetOfInputs^{*}}$ is, by definition, an \emph{attack}. An $\TheFunctor$-coalgebra $\TheCoalgebra=(\TheSet, \TheOutputFunction,\delta)$ affected by $\alpha$ has the observable behaviour $\TheBehaviourOf{\TheElement}$ of each element $\TheElement\in \TheSet$ replaced by the behaviour $\alpha(\TheBehaviourOf{\TheElement})$. 
An \emph{attacker} $A$ is a set of attacks.
\end{definition}

The set $\ThePowersetOf{\TheSetOfOutputs}^{\TheSetOfInputs^{*}}$ has a \emph{final $\TheFunctor$-coalgebra structure} \cite{UniversalCoalgebra}, which enables the \emph{coinductive definition} of attacks. 
\begin{definition}[Final $\TheFunctor$-coalgebra]
The $\TheFunctor$-coalgebra $\Omega\triangleq(\ThePowersetOf{\TheSetOfOutputs}^{\TheSetOfInputs^{*}}, \theta_\Omega, \delta_\Omega)$ is the \emph{final $\TheFunctor$-coalgebra}; defined for $\sigma \in \ThePowersetOf{\TheSetOfOutputs}^{\TheSetOfInputs^{*}}$, $\TheInput\in\TheSetOfInputs$, and $\TheSequenceOfInputs\in \TheSetOfInputs^{*}$ by
\begin{align}\theta_\Omega(\sigma)\triangleq \sigma(\varepsilon),\quad\text{ and }\quad\delta_\Omega(\sigma)(\TheInput)\triangleq \lambda \TheSequenceOfInputs \in \TheSetOfInputs^{*} \ldotp\sigma(\TheInput:\TheSequenceOfInputs).\end{align}
\end{definition}
The commutative diagram in Figure~\ref{fig:AttackedBehaviour} illustrates the effect of an attack $\alpha$ on the elements of an $\TheFunctor$-coalgebra $\TheCoalgebra=(\TheSet, \theta,\delta)$. 
Because $\Omega$ is the final $\TheFunctor$-coalgebra, the pairing $(\theta_\Omega, \delta_\Omega)$ is an isomorphism \cite{UniversalCoalgebra}, and we can coinductively define the attack $\alpha$ through the functions $\theta_\alpha\colon \ThePowersetOf{\TheSetOfOutputs}^{\TheSetOfInputs^{*}}\rightarrow  \ThePowersetOf{\TheSetOfOutputs} $ and $\delta_\alpha\colon \ThePowersetOf{\TheSetOfOutputs}^{\TheSetOfInputs^{*}} \rightarrow \left(\ThePowersetOf{\TheSetOfOutputs}^{\TheSetOfInputs^{*}}\right)^\TheSetOfInputs$, since $\alpha=(\theta_\Omega,\delta_\Omega)^{-1}\circ(\theta_\alpha,\delta_\alpha)$, where $\circ$ is function composition.
{We take a coinductive approach for the definition of attacks because coinductive definitions make explicit the effects of an attack over observations and transitions.}

\begin{figure}[t]
\centering
{\scriptsize
{
\begin{tikzpicture}[descr/.style={fill=white,inner sep=2.5pt}]
  \matrix (m) [matrix of math nodes, row sep=3em,
  column sep=7em]
  {
  X & \ThePowersetOf{\TheSetOfOutputs}^{\TheSetOfInputs^{*}}& \ThePowersetOf{\TheSetOfOutputs}^{\TheSetOfInputs^{*}}
  \\
  \ThePowersetOf{\TheSetOfOutputs}\times \TheSet^\TheSetOfInputs & \ThePowersetOf{\TheSetOfOutputs}\times \left(\ThePowersetOf{\TheSetOfOutputs}^{\TheSetOfInputs^{*}}\right)^\TheSetOfInputs& \ThePowersetOf{\TheSetOfOutputs}\times \left(\ThePowersetOf{\TheSetOfOutputs}^{\TheSetOfInputs^{*}}\right)^I
  \\
  };
  \path[->]
	(m-1-1) edge node[descr] {$ (\theta,\delta)$} (m-2-1) 
	(m-1-2) edge node[descr] {$ (\theta_\Omega,\delta_\Omega)$} (m-2-2) edge node[auto] {$ \alpha $} (m-1-3) 
	(m-2-3) edge node[descr] {$ (\theta_\Omega,\delta_\Omega)^{-1}$} 
	(m-1-3)
	(m-1-2) edge node[descr] {$ (\theta_\alpha,\delta_\alpha)$} (m-2-3)
 ;
 \path[->, dashed]
	(m-1-1) edge node[auto] {$\TheBehaviourOf{-}$} (m-1-2) 
	(m-2-1) edge node[auto] {$\id_{ \ThePowersetOf{\TheSetOfOutputs}}\times \TheBehaviourOf{-}^{\id_I}$} (m-2-2)
 ;
\end{tikzpicture}}}
\caption{Transformation caused by an attack $\alpha\colon \ThePowersetOf{\TheSetOfOutputs}^{\TheSetOfInputs^{*}}\rightarrow \ThePowersetOf{\TheSetOfOutputs}^{\TheSetOfInputs^{*}}$ on the behaviours of an $\TheFunctor$-coalgebra $(\TheSet, \theta,\delta).$ (A function $\id_{\ThePowersetOf{\TheSetOfOutputs}}\times\TheFunction^{\id_{\TheSetOfInputs}}$ is defined, for $(\AnotherSet,\OtherFunction)\in  \ThePowersetOf{\TheSetOfOutputs}\times \TheSet^\TheSetOfInputs$, by 
$
\left(\id_{\ThePowersetOf{\TheSetOfOutputs}}\times\TheFunction^{\id_{\TheSetOfInputs}}\right)(\AnotherSet,\OtherFunction)\triangleq(\AnotherSet,\TheFunction\circ\OtherFunction))$, where $\circ$ is function composition.
}
\label{fig:AttackedBehaviour}
\vspace{-0.25cm}
\end{figure}
\begin{example}[Attacking a Dial of a Combination Lock]
\label{ex:Dial}
Let $\AsSet{10}=\set{0,..,9}$, $\TheSetOfInputs=\set{\bullet}$ and $\TheSetOfOutputs=\AsSet{10}$. We model a \emph{dial system} using the $\TheFunctor$-coalgebra $\mathbb{D}=(\AsSet{10}, \TheOutputFunction,\delta)$, defined for $\TheElement\in\AsSet{10} $ by:
\begin{align*}
\TheOutputOf{\TheElement}\triangleq\set{\TheElement},\quad\text{and}\quad \delta(\TheElement,\bullet)\triangleq\TheElement+1~(mod~10)
\end{align*}
We coinductively define an attack $\alpha$, for $\TheBehaviour\in \ThePowersetOf{\TheSetOfOutputs}^{\TheSetOfInputs^{*}}$ and $\TheInput\in\TheSetOfInputs$, by $\theta_\alpha(\TheBehaviour)\triangleq\set{0}$ and $\delta_\alpha(\TheBehaviour)(\TheInput)\triangleq\alpha(\delta_\Omega(\TheBehaviour)(\TheInput))$. It is not difficult to show, for $\sigma\in\ThePowersetOf{\TheSetOfOutputs}^{\TheSetOfInputs^{*}}$ and $\TheInputSequence \in \TheSetOfInputs^{*}$, that $\alpha(\sigma)(\TheInputSequence)=\set{0}$. The attack $\alpha$ models a change in the observation function $\theta$ such that $\theta(\TheElement)$ evaluates to $\set{0}$ for all $\TheElement\in \AsSet{10}$. 

We also coinductively define an attack $\beta$ by $\theta_\beta(\TheBehaviour)\triangleq\theta_\Omega(\TheBehaviour)$ and $
\delta_\beta(\TheBehaviour)(\TheInput)\triangleq \beta(\TheBehaviourOf{0})$. The attack $\beta$ is slightly different from the attack $\alpha$, since it models a change in the transition function $\delta$ such that $\delta(\TheElement)$ evaluates to $0$ for all $\TheElement\in \AsSet{10}$. Both attacks have the same effect on the behaviour $\TheBehaviourOf{0}$, i.e., $\alpha(\TheBehaviourOf{0})=\beta(\TheBehaviourOf{0})$; thus, we see that different attacks can have the same effect over the behaviour of some states. Finally, from the set of attacks $\set{\alpha,\beta}$ we can obtain four attackers, which correspond to the subsets of $\set{\alpha,\beta}$.
\end{example}
We now present a logic to describe properties of $\TheFunctor$-coalgebras. Coalgebraic logics are $\mu$-calculus-like logics that naturally define an $\TheFunctor$-coalgebra on formulae. This provides intuitive semantics for formulae (a $\TheFunctor$-coalgebra) and formulae satisfaction (in terms of relationships between $\TheFunctor$-coalgebras).
\begin{definition}[Safety Formulae] 
\label{def:SafetyFormulae}
Let $\TheSetOfVariables$ be a set of variables. Let $\ThePredicate\subseteq \TheSetOfInputs$, $\OtherPredicate \subseteq \TheSetOfOutputs$, and $\TheVariable \in \TheSetOfVariables$; the set $\TheSetOfSafetyFormulae{{}}$ of \emph{safety formulae} is given by the BNF syntax
\begin{align}
\TheSafetyFormula &::=
\TheVariable\ |\ \After{\ThePredicate}\TheSafetyFormula\ |
\ \TheOutputIs{\OtherPredicate}
\ 
|\  \TheSafetyFormula \land \TheSafetyFormula\
|\ \ \nu\TheVariable.\TheSafetyFormula(\TheVariable).
\end{align}
\end{definition}
Formulae of the form $\After{\ThePredicate}\TheSafetyFormula$ are \emph{transition formulae}, formulae of the form $\TheOutputIs{\OtherPredicate}$ are \emph{observation formulae}, and formulae of the form $\TheSafetyFormula(\TheVariable)$ are \emph{guarded modal formulae}. 
We denote the set of closed and guarded safety formulae (i.e., formulae without free occurrences of fixed point variables $\TheVariable$, unless guarded) by $\TheSetOfClosedAndGuardedSafetyFormulae{{}}$. We also use the syntactic sugar $\True$ to denote $\TheOutputIs{\TheSetOfOutputs}$, and $\lnot \TheOutputIs{\OtherPredicate}$ to denote $\TheOutputIs{\TheSetOfOutputs - \OtherPredicate}$. 

The semantics of a safety formula describes which observations are valid for the current state, and which obligations must hold for successor states so that the formula is valid at the current state. We define these semantics using an $\TheFunctor$-coalgebra.

\begin{definition}[Semantics] 
\label{def:SemanticsOfDTSFormulae}
We define the $\TheFunctor$-coalgebra $\zeta$
by $\zeta\triangleq(\TheSetOfClosedAndGuardedSafetyFormulae, \TheOutputFunction_\zeta, \delta_\zeta),$
where $ \TheOutputFunction_\zeta\colon\TheSetOfClosedAndGuardedSafetyFormulae\rightarrow \ThePowersetOf{O}\text{ and } \delta_\zeta\colon\TheSetOfClosedAndGuardedSafetyFormulae\rightarrow\left(\TheSetOfClosedAndGuardedSafetyFormulae\right)^\TheSetOfInputs$ are defined in Table~\ref{tab:Semantics}. 
\begin{table}[!t]
\begin{mdframed}
\begin{tabularx}{0.5\textwidth}{L  L L L L L L }
\TheOutputFunction_\zeta(\After{\ThePredicate}\TheSafetyFormula)	&\triangleq&	O &\quad&\delta_\zeta(\After{\ThePredicate}\TheSafetyFormula,i)	&\triangleq&
	\begin{cases}	
		\TheSafetyFormula,&\text{if $i\in \ThePredicate$}\\
		\True, &\text{otherwise.}
	\end{cases}\\
\TheOutputFunction_\zeta(\TheOutputIs{\OtherPredicate})	&\triangleq&	\OtherPredicate &\quad&\delta_\zeta(\TheOutputIs{\OtherPredicate},i)	&\triangleq&	\True\\
\TheOutputFunction_\zeta(\TheSafetyFormula_1 \land \TheSafetyFormula_2)	&\triangleq&	\TheOutputFunction_\zeta(\TheSafetyFormula_1) \cap \TheOutputFunction_\zeta(\TheSafetyFormula_2)&\quad\quad\quad&\delta_\zeta(\TheSafetyFormula_1 \land \TheSafetyFormula_2)(i)	&\triangleq&\delta_\zeta(\TheSafetyFormula_1)(i) \land \delta_\zeta(\TheSafetyFormula_2)(i)\\
\TheOutputFunction_\zeta(\nu\TheVariable.\TheSafetyFormula(\TheVariable))	&\triangleq&	\TheOutputFunction_\zeta(\TheSafetyFormula\left[\nu\TheVariable.\TheSafetyFormula(\TheVariable)/\TheVariable\right])&\quad&\delta_\zeta(\nu\TheVariable.\TheSafetyFormula(\TheVariable))(\TheInput)	&\triangleq&\delta_\zeta(\TheSafetyFormula\left[\nu\TheVariable.\TheSafetyFormula(\TheVariable)/\TheVariable\right])(i)
\end{tabularx}
\end{mdframed}
\caption{Definitions of $\TheOutputFunction_\zeta$ and $\delta_\zeta$. The expression $\TheSafetyFormula\left[\nu\TheVariable.\TheSafetyFormula(\TheVariable)/\TheVariable\right]$ denotes the syntactic substitution of every free occurrence of $\TheVariable$ in $\TheSafetyFormula(\TheVariable)$ by $\nu\TheVariable.\TheSafetyFormula(\TheVariable)$. }
\label{tab:Semantics}
\vspace{-0.75cm}
\end{table}

Since the only $\TheFunctor$-coalgebra we define over $\TheSetOfClosedAndGuardedSafetyFormulae$ is $\zeta$, we henceforth write $\theta(\TheSafetyFormula)$ and $\delta(\TheSafetyFormula)$ instead of $\theta_\zeta(\TheSafetyFormula)$ and $\delta_\zeta(\TheSafetyFormula)$ to ease readability. 
\end{definition}

The size of $\TheSafetyFormula$ is the number of formulae that are reachable from it, i.e.,
\begin{align*}
|\TheVariable|~&= 1,\quad\quad |~[P]\TheSafetyFormula| = 1+|\TheSafetyFormula|,\quad \quad| \TheOutputIs{Q} |= 1,\\
|\TheSafetyFormula_1\land\TheSafetyFormula_2|~&= max(|\TheSafetyFormula_1|,|\TheSafetyFormula_2|),\quad\quad\ \   |\nu\TheVariable.\TheSafetyFormula(\TheVariable)|~=|\TheSafetyFormula\left[\nu\TheVariable.\TheSafetyFormula(\TheVariable)/\TheVariable\right])|.
\end{align*}

To capture the notion of safety formulae satisfaction, we use \emph{simulation relations} between $\TheFunctor$-coalgebras.
\begin{definition}[Simulation of $\TheFunctor$-coalgebras]
\label{def:Simulation}
Let $\TheCoalgebra=\AsSequence{\TheSet,\theta_\TheCoalgebra, \delta_\TheCoalgebra}$ and $\AnotherCoalgebra=\AsSequence{\AnotherSet,\theta_\AnotherCoalgebra, \delta_\AnotherCoalgebra}$ be $\AsFunctor{F}$-coalgebras. Given $\TheRelation\subseteq \TheSet\times\AnotherSet$, we say that $\TheRelation$ is a \emph{simulation from $\TheCoalgebra$ to $\AnotherCoalgebra$} if and only if $
\theta_\TheCoalgebra(\TheElement)\subseteq\theta_\AnotherCoalgebra(\AnotherElement)$ and $(\delta_\TheCoalgebra(\TheElement)(\TheInput),\delta_\AnotherCoalgebra(\AnotherElement)(\TheInput)) \in \TheRelation$ for all $(\TheElement, \AnotherElement)\in \TheRelation$ and $\TheInput \in \TheSetOfInputs$.
The greatest simulation relation, 
$\lesssim$, is called \emph{similarity}.
\end{definition}

\begin{definition}[Safety Formula Satisfaction]
\label{def:Satisfaction}
Given an $\TheFunctor$-coalgebra $\SomeCoalgebra{1}=(\TheSet, \TheOutputFunction, \delta)$, $\TheElement\in \TheSet$, and a safety formula $\TheSafetyFormula\in \TheSetOfClosedAndGuardedSafetyFormulae{{}}$, we say that $\TheElement$ \emph{satisfies} $\TheSafetyFormula$ in the context of $\TheCoalgebra$, written $(\TheCoalgebra,\TheElement)\vDash \TheSafetyFormula$, if and only if $\TheElement\lesssim\TheSafetyFormula$; i.e., the pair $(\TheElement,\TheSafetyFormula)$ belongs to the similarity relation from $\TheCoalgebra$ to $\zeta$.
\end{definition}

In other words, $(\TheCoalgebra,\TheElement)\vDash \TheSafetyFormula$ if and only if all the observations of $\TheElement$ are in the observations allowed by $\TheSafetyFormula$, and all the successors of $\TheElement$ satisfy the transition formulae associated with $\TheSafetyFormula$. More precisely, $(\TheCoalgebra,\TheElement)\vDash \TheSafetyFormula$, if and only \begin{align}
\TheOutputOf{\TheElement}\subseteq\TheOutputFunction(\TheSafetyFormula), \quad\text{and}\quad (\TheCoalgebra,\delta(\TheElement)(\TheInput))\vDash \delta(\TheSafetyFormula)(\TheInput),~\text{for all $\TheInput\in \TheSetOfInputs$.}\label{eq:Transition}
\end{align}
\begin{remark}[Formulae Implication]
\label{rem:FormulaeImplication}
Consider two safety formulae $\TheSafetyFormula_1$ and $\TheSafetyFormula_2$ with $\TheSafetyFormula_1\lesssim\TheSafetyFormula_2$. Whenever $(\TheCoalgebra,\TheElement)\vDash \TheSafetyFormula_1$, we can conclude $(\TheCoalgebra,\TheElement)\vDash \TheSafetyFormula_2$ that since the composition of simulation relations is a simulation relation itself. Thus, the similarity relation $\lesssim$ in the $\TheFunctor$-coalgebra $\zeta$ corresponds to \emph{formulae implication}.
\end{remark}

\begin{definition}[Formulae] 
\label{def:Formulae}
Let $\ThePredicate\subseteq \TheSetOfInputs$ and $\TheSafetyFormula \in \TheSetOfClosedAndGuardedSafetyFormulae$; the set $\TheSetOfFormulae{{}}$ of \emph{formulae} is given by the BNF syntax
\begin{align}
\TheFormula &::= \TheSafetyFormula\ | \ \lnot \TheFormula \ |\ 
\TheVariable\ |\ \After{\ThePredicate}\TheFormula\ |
\  \TheFormula \land \TheFormula\
|\ \ \nu\TheVariable.\TheFormula(\TheVariable).
\end{align}
\end{definition}
We use syntactic sugar in the usual way:  $\TheFormula_1\lor \TheFormula_2$ denotes $\lnot (\lnot \TheFormula_1 \land \lnot \TheFormula_2)$, $\TheFormula_1\Rightarrow \TheFormula_2$ denotes $\lnot \TheFormula_1 \lor \TheFormula_2$, $\StrongAfter{\ThePredicate}\TheFormula$ denotes $\lnot (\After{\ThePredicate} \lnot \TheFormula)$, $ \mu\TheVariable.\TheFormula(\TheVariable)$ denotes $ \lnot (\nu\TheVariable.\lnot \TheFormula(\lnot \TheVariable))$, $\Always \TheFormula$ denotes $\nu \TheVariable. \TheFormula \land \After{\TheSetOfInputs}(\TheVariable)$, and  $\Eventually \TheFormula$ denotes $\lnot  (\Always \lnot \TheFormula)$. Similarly, the set $\TheSetOfClosedAndGuardedFormulae$ of closed and guarded formulae contains all formulae that have no occurrences of free variables.

The notion of satisfaction for non-safety formulae extends that of safety formulae, and it is inductively defined as follows:
\begin{align*}
&(\TheCoalgebra, \TheElement) \vDash \lnot \TheFormula &&\Leftrightarrow&& \lnot ((\TheCoalgebra, \TheElement) \vDash  \TheFormula),\\
&(\TheCoalgebra, \TheElement) \vDash \After{\ThePredicate}\TheFormula &&\Leftrightarrow&&(\forall \TheInput \in \ThePredicate\ldotp(\TheCoalgebra, \delta(\TheElement)(\TheInput)) \vDash  \TheFormula),\\
&(\TheCoalgebra, \TheElement) \vDash \TheFormula_1 \land  \TheFormula_2 &&\Leftrightarrow&& (\TheCoalgebra, \TheElement) \vDash \TheFormula_1 \land (\TheCoalgebra, \TheElement) \vDash \TheFormula_2\\
&(\TheCoalgebra, \TheElement) \vDash  \nu\TheVariable.\TheFormula(\TheVariable) &&\Leftrightarrow&& (\TheCoalgebra, \TheElement) \vDash \TheFormula\left[\nu\TheVariable.\TheFormula(\TheVariable)/\TheVariable\right].
\end{align*}

\begin{definition}[Counterexample]
\label{def:Counterexamples}
Let $\SomeCoalgebra{1}=(\TheSet, \TheOutputFunction, \delta)$ be an $\TheFunctor$-coalgebra, $\TheElement\in \TheSet$, and $\TheFormula\in \TheSetOfClosedAndGuardedFormulae{{}}$; say that a pair $(\TheElement_{\TheCounterexample},\TheFormula_{\TheCounterexample})$ is a \emph{counterexample for $(\TheCoalgebra,\TheElement)\vDash \TheFormula$} 
 
if and only if $
(\TheCoalgebra,\TheElement_{\TheCounterexample})\not\vDash\TheFormula_{\TheCounterexample}$ and that $(\TheCoalgebra,\TheElement_{\TheCounterexample})\not\vDash\TheFormula_{\TheCounterexample}$ implies $(\TheCoalgebra,\TheElement)\not\vDash\TheFormula$.

\end{definition}
As our counterexamples do not involve state traces, they could be considered ``not very informative'' for the purposes of debugging or security patching; however, the existence of a counterexample suffices for the purposes of attacker quantification.

Following Equation~\ref{eq:Transition}, for a given a safety formula $\TheSafetyFormula$, there are two ordinary reasons for the existence of a counterexample for $(\TheCoalgebra,\TheElement)\vDash \TheSafetyFormula$: either $(\TheElement, \TheSafetyFormula)$ is a counterexample, or there is a counterexample $(\delta(\TheElement)(\TheInput),\delta(\TheSafetyFormula)(\TheInput))$ for some $\TheInput\in \TheSetOfInputs$. 
\begin{example}
\label{ex:CheckDials}
Consider the predicate $(\cdot = z)\triangleq\set{y|y\in \OtherSet \text{~and~}y=z}$ for some set $\OtherSet$, and similarly for the predicate $(\cdot \neq z)$. Recall the dial system from Example~\ref{ex:Dial}, and consider the non-safety property $\Eventually\TheOutputIs{\cdot = n}$ at state 0, which informally means ``the dial eventually displays the value $n$ (for some $n \in \AsSet{10}$).'' The high-level verification procedure is as follows: 
since $\Eventually\TheOutputIs{\cdot = n} =\lnot\left(\Always\TheOutputIs{{\cdot\neq n}}\right)$, and since $(n,\Always{\TheOutputIs{\cdot\neq n}})$ is a counterexample for $(\mathbb{D},0)\vDash\Always\TheOutputIs{{\cdot\neq n}}$, we conclude that $(\mathbb{D} ,0)\vDash\Eventually\TheOutputIs{{\cdot= n}}$.

Consider now an attack $\alpha$, coinductively defined for $\sigma \in \ThePowersetOf{\TheSetOfOutputs}^{\TheSetOfInputs^{*}}$, 
by $\theta_\alpha(\TheBehaviour)\triangleq\theta_\Omega(\TheBehaviour),$ and  $\delta_\alpha(\TheBehaviour)(\bullet)\triangleq\alpha(\delta^{*}_\Omega(\TheBehaviour)(\bullet,\bullet)).$ The attack $\alpha$ changes the behaviour of the dial system $\mathbb{D}$, yielding the $\TheFunctor$-coalgebra $(\AsSet{10}, \theta|_{\alpha}, \delta|_{\alpha})$, defined by 
$\theta|_{\alpha}(n)\triangleq\theta(n),$ and $
\delta|_{\alpha}(n)(\bullet)\triangleq\delta^{*}(n)(\bullet,\bullet).$ Since the attack $\alpha$ forces $\delta(\TheElement)(\bullet)$ to be $\TheElement+2~(mod~10)$ instead of $\TheElement+1~(mod~10)$, state 0 now fails to satisfy $\Eventually\TheOutputIs{\cdot =1} $, because the observation of all states that are reachable from 0 are even numbers.
\end{example}

\section{Attacker Quantification}
\label{sec:ProblemStatement}
We present the problem of attacker quantification in terms of the final $\TheFunctor$-coalgebra without losing generality thanks to the following Proposition~\ref{prop:LogicAndFinality}.

\begin{proposition}
\label{prop:LogicAndFinality}
For an $\TheFunctor$-coalgebra $\TheCoalgebra=(\TheSet, \theta,\delta)$, $\TheElement\in \TheSet$, and for $\TheFormula\in \TheSetOfClosedAndGuardedFormulae$, 
\begin{align}
(\TheCoalgebra, \TheElement)\vDash \TheFormula \text{ if and only if } (\Omega, \TheBehaviourOf{\TheElement})\vDash \TheFormula.
\end{align}
\end{proposition}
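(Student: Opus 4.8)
The plan is to lean on the fact that $\TheBehaviourOf{-}_\TheCoalgebra\colon\TheCoalgebra\rightarrow\Omega$ is an $\TheFunctor$-coalgebra homomorphism, and then to proceed by structural induction on $\TheFormula$, proving $(\TheCoalgebra,\TheElement)\vDash\TheFormula\iff(\Omega,\TheBehaviourOf{\TheElement})\vDash\TheFormula$. That $\TheBehaviourOf{-}_\TheCoalgebra$ is a homomorphism is immediate from Definition~\ref{def:ObservableBehaviour} together with the definition of $\Omega$: it preserves observations, because $\theta_\Omega(\TheBehaviourOf{\TheElement})=\TheBehaviourOf{\TheElement}(\varepsilon)=\TheOutputOf{\TheElement}$, and it commutes with transitions, because $\delta_\Omega(\TheBehaviourOf{\TheElement})(\TheInput)=\lambda\TheSequenceOfInputs\in\TheSetOfInputs^{*}\ldotp\TheBehaviourOf{\TheElement}(\TheInput:\TheSequenceOfInputs)=\TheBehaviourOf{\delta(\TheElement)(\TheInput)}$ for every $\TheInput\in\TheSetOfInputs$.

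The base case is $\TheFormula=\TheSafetyFormula\in\TheSetOfClosedAndGuardedSafetyFormulae$, where by Definition~\ref{def:Satisfaction} satisfaction means $\TheElement\lesssim\TheSafetyFormula$ for the similarity relation from $\TheCoalgebra$ to $\zeta$. I would first establish the auxiliary lemma that \emph{$\TheFunctor$-coalgebra homomorphisms preserve and reflect similarity on their source}: if $\TheHomomorphism\colon\TheCoalgebra\rightarrow\AnotherCoalgebra$ is an $\TheFunctor$-coalgebra homomorphism and $\SomeCoalgebra{3}$ is any $\TheFunctor$-coalgebra, then, for a state $\TheElement$ of $\TheCoalgebra$ and a state $\OtherElement$ of $\SomeCoalgebra{3}$, $\TheElement\lesssim\OtherElement$ (similarity from $\TheCoalgebra$ to $\SomeCoalgebra{3}$) holds if and only if $\TheHomomorphism(\TheElement)\lesssim\OtherElement$ (similarity from $\AnotherCoalgebra$ to $\SomeCoalgebra{3}$). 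For the forward direction one checks that the image $\set{(\TheHomomorphism(\TheElement),\OtherElement)\mid (\TheElement,\OtherElement)\in\TheRelation}$ of a simulation $\TheRelation$ from $\TheCoalgebra$ to $\SomeCoalgebra{3}$ is again a simulation, this time from $\AnotherCoalgebra$ to $\SomeCoalgebra{3}$, using that $\TheHomomorphism$ preserves observations and commutes with transitions; for the converse one checks symmetrically that the preimage $\set{(\TheElement,\OtherElement)\mid (\TheHomomorphism(\TheElement),\OtherElement)\in\TheSimulation}$ of a simulation $\TheSimulation$ from $\AnotherCoalgebra$ to $\SomeCoalgebra{3}$ is a simulation from $\TheCoalgebra$ to $\SomeCoalgebra{3}$; in both cases maximality of $\lesssim$ finishes the argument. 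Instantiating $\TheHomomorphism=\TheBehaviourOf{-}_\TheCoalgebra$, $\AnotherCoalgebra=\Omega$, and $\SomeCoalgebra{3}=\zeta$ gives $(\TheCoalgebra,\TheElement)\vDash\TheSafetyFormula$ iff $(\Omega,\TheBehaviourOf{\TheElement})\vDash\TheSafetyFormula$.

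For the inductive step I would run through the remaining clauses of Definition~\ref{def:Formulae} using the inductive definition of $\vDash$ for non-safety formulae. The cases $\lnot\TheFormula$ and $\TheFormula_1\land\TheFormula_2$ follow immediately from the induction hypothesis. For $\After{\ThePredicate}\TheFormula$ we have $(\TheCoalgebra,\TheElement)\vDash\After{\ThePredicate}\TheFormula$ iff $(\TheCoalgebra,\delta(\TheElement)(\TheInput))\vDash\TheFormula$ for every $\TheInput\in\ThePredicate$, iff $(\Omega,\TheBehaviourOf{\delta(\TheElement)(\TheInput)})\vDash\TheFormula$ for every such $\TheInput$ by the induction hypothesis, iff $(\Omega,\delta_\Omega(\TheBehaviourOf{\TheElement})(\TheInput))\vDash\TheFormula$ for every such $\TheInput$ since $\TheBehaviourOf{\delta(\TheElement)(\TheInput)}=\delta_\Omega(\TheBehaviourOf{\TheElement})(\TheInput)$, iff $(\Omega,\TheBehaviourOf{\TheElement})\vDash\After{\ThePredicate}\TheFormula$. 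For $\nu\TheVariable.\TheFormula(\TheVariable)$ both sides reduce to the unfolding $\TheFormula\left[\nu\TheVariable.\TheFormula(\TheVariable)/\TheVariable\right]$, and the induction applies to that unfolding because closedness and guardedness ensure that every reintroduced copy of $\nu\TheVariable.\TheFormula(\TheVariable)$ occurs underneath a modality $\After{\cdot}$, so the fragment of the formula above the first modality strictly shrinks and the recursion is well-founded.

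I expect the auxiliary lemma relating homomorphisms and similarity to be the main obstacle: it is the place where finality of $\Omega$ really enters (via $\TheBehaviourOf{-}_\TheCoalgebra$ being a homomorphism), and it has to be stated carefully because the two similarity relations compared are relations over different carriers (that of $\TheCoalgebra$ on one side, that of $\Omega$ on the other). The fixpoint case is a lesser subtlety, requiring guardedness to make the unfolding-based induction terminate; the remaining cases are routine bookkeeping.
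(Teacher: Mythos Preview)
Your approach is correct and in fact more self-contained than the paper's. The paper's proof establishes exactly what your first paragraph does---that the graph of $\TheBehaviourOf{-}$ is a bisimulation between $\TheCoalgebra$ and $\Omega$ (the observation and transition computations are identical to yours)---and then simply appeals to the general fact that ``coalgebraic logics respect bisimulations'' to conclude. In other words, the paper treats invariance of $\vDash$ under bisimulation as a black box imported from the literature on coalgebraic modal logic, whereas you prove that invariance by hand: the auxiliary lemma on similarity handles the safety-formula base case, and structural induction through the connectives handles the non-safety layer.

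What each buys: the paper's argument is shorter and makes explicit that the result is an instance of a standard coalgebraic phenomenon, but it leaves the reader to trust the cited invariance theorem (which, for a logic with negation and greatest fixpoints layered on top of a simulation-based safety fragment, is not entirely off-the-shelf). Your argument is longer but internal to the paper's own definitions; the lemma you isolate (homomorphisms preserve and reflect similarity in the first coordinate) is precisely the piece of bisimulation invariance needed for the base case, and its image/preimage verification is correct as sketched. The one place your write-up is thinner than it should be is the $\nu$-case: the unfolding $\TheFormula[\nu\TheVariable.\TheFormula(\TheVariable)/\TheVariable]$ is not a structural subformula, so a plain structural induction does not apply. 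You gesture at the fix---guardedness guarantees that a measure such as ``number of connectives above the first modality on each branch'' strictly decreases---but a careful version would state that measure explicitly and induct on it, or else observe that the only genuinely new fixpoints in the non-safety layer arise through $\lnot$ and can be pushed back into the safety base case via the $\mu/\nu$ duality.
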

\begin{proof} See Appendix~\ref{app:ProofLogicAndFinality}.\end{proof}

Our formulation of the attacker quantification problem is the following: given an $\TheFunctor$-coalgebra $\TheCoalgebra=(\TheSet, \theta,\delta)$, a distinguished state $\TheElement_0\in \TheSet$, a set of closed and guarded formulae $\mathbb{F}=\set{\TheFormula_1, \ldots, \TheFormula_n}$ 
 such that $(\TheCoalgebra,\TheElement_0) \vDash \TheFormula_1\land \ldots \land \TheFormula_n$
 , we are given a set of attackers $\mathbb{A}=\set{A_1, \ldots, A_k}$, where each attacker $A\in \mathbb{A}$ is a set of attacks (i.e., $A$ is a subset of $ \ThePowersetOf{\TheSetOfOutputs}^{\TheSetOfInputs^{*}}\rightarrow \ThePowersetOf{\TheSetOfOutputs}^{\TheSetOfInputs^{*}}$). We define \emph{capabilities of an attacker $A$ at $\TheElement_0$ in the context of $\TheCoalgebra$} as the subset of formulae in $\mathbb{F}$ that do not hold in at least one of the compromised behaviours of $\TheElement_0$; more precisely 
\begin{align}
capabilities(A, \TheCoalgebra, \TheElement_0)\triangleq\bigcup_{\alpha \in A} \set{\TheFormula|\TheFormula\in \mathbb{F}~\text{and}~(\Omega, \alpha(\TheBehaviourOf{\TheElement_0})) \not \vDash \TheFormula }.
\end{align}
We say that an attacker $A_1$ is \emph{less capable} than an attacker $A_2$ at $\TheElement_0$ in the context of $\TheCoalgebra$, written $A_1\leq_{(\TheCoalgebra,\TheElement_0)} A_2$, if and only if $capabilities(A_1, \TheCoalgebra,\TheElement_0) \subseteq  capabilities(A_2, \TheCoalgebra,\TheElement_0)$. The main objective of this work is to find an efficient algorithm that calculates the relation $\leq_{(\TheCoalgebra,\TheElement_0)}$ for the attackers in $\mathbb{A}$.

We remark that the number of attacks available to an attacker and the effective capabilities of that attacker are not necessarily proportional: an attacker $A_1$ who has several attacks might not be as capable as an attacker $A_2$ that has only one attack, because the number of properties violated by $A_1$ may be smaller than the number of properties violated by $A_2$. For example, an attacker has $n$ attacks that allow her to spy in total $log~n$ of a secret of size $n$ can be considered weaker than an attacker that has only one attack which reveals the whole secret.

We use the following running example to illustrate the concepts presented the remainder of this section. 
\begin{example}[A Combination Lock]
\label{ex:SimpleLock}

By putting four dials from Example~\ref{ex:Dial} together, we form a \emph{combination lock}. 
Let $\TheSetOfInputs=\AsSet{4}=\set{0,1,2,3}$ and $\TheSetOfOutputs=\AsSet{10^4}$; we define the $\TheFunctor$-coalgebra $\mathbb{C}\triangleq(\AsSet{10^4}, \Theta,\Delta)$, where an input $\TheInput\in\AsSet{4}$ describes which dial should increase its counter. The functions $\Theta$ and $\Delta$ are defined, for $\vect{\TheElement}\in \AsSet{10^4}$ and $\TheInput, j\in\AsSet{4}$, by
\begin{align}
\Theta(\vect{\TheElement})\triangleq \set{\vect{\TheElement}}, \quad\text{ and }\quad \Delta(\vect{\TheElement})(\TheInput)(j)\triangleq \begin{cases}\vect{\TheElement}(j),\quad &\text{if $\TheInput\neq j$};\\
\delta(\vect{\TheElement}(\TheInput),\bullet),\quad &\text{otherwise}.
\end{cases}
\end{align}
Now, given a value $\vect{n} \in \AsSet{10^4}$, consider the property $\Eventually \TheOutputIs{\cdot = \vect{n}}$, which informally means ``there is a sequence of inputs that will cause the combination lock to output $\vect{n}$.'' 

For a naive model checking algorithm, to prove that the state $0000$ satisfies all properties $\Eventually\TheOutputIs{\cdot = \vect{n}}$ for $\vect{n}$ with $0000 \leq \vect{n}\leq 9999$, it would be necessary to infer 10000 counterexamples (one for each corresponding safety property). However, in the following, we present an algorithm that can infer these counterexamples in a smarter way by means of {enhanced coinduction} \cite{EnhancedCoalgebraicBisimulation,EnhancedCoinduction}.

\end{example} 
Our model checking algorithm $\Verify$ is presented in Algorithm~\ref{alg:Verification}. $\Verify$ uses  \emph{enhanced coinduction}: a technique that builds relations whose closures given a set of algebraic operators are (bi)simulations. These relations help to reduce state exploration and to infer counterexamples.

$\Verify$ receives as input {a tuple $(\TheCoalgebra,\TheElement_0,\TheSafetyFormula_0, R_0, F_0)$ of an $\TheFunctor$-coalgebra $\TheCoalgebra=(\TheSet, \theta,\delta)$, a state $\TheElement_0$, a safety formula $\TheSafetyFormula_0$, a
relation $R_0$ where $\TheElement~R_0~\TheSafetyFormula\Rightarrow\TheElement\vDash\TheSafetyFormula$; and a 
	relation $F_0$ where $\TheElement~F_0~\TheSafetyFormula\Rightarrow\TheElement\not\vDash\TheSafetyFormula$. 
}
After execution, $\Verify$ returns {a tuple $(res,R,F)$ where $res$ equals $true$ if $(\TheCoalgebra,\TheElement_0)\vDash\TheSafetyFormula_0$, otherwise $res$ equals $false$; 
 	a relation $R$, where, if $(\TheElement,\TheSafetyFormula)\in R$, then $\TheElement\vDash\TheSafetyFormula$; and a 
	relation $F$ where, if $(\TheElement,\TheSafetyFormula)\in F$, then $\TheElement\not\vDash\TheSafetyFormula$.}\,
\begin{algorithm}[t]
{\footnotesize
{$R\leftarrow R_0$}\;
 $F\leftarrow F_0$\;
{$todo\leftarrow\left[(\TheElement_0,\TheSafetyFormula_0)\right]$}\;
\While{$todo\neq [ ]$}{
 	$(\TheElement, \TheSafetyFormula)\leftarrow todo.\AsFunction{pop}()$\;
	 \If{$\crtu(\set{(\TheElement, \TheSafetyFormula)})\cap F \neq \emptyset$}{
	 \label{line:inF}
		 \Return $(false, \TheRelation_0,F)$\;
	 }
  	\If{$(\TheElement, \TheSafetyFormula) \not\in\crtu(R)$}{
 	 \label{line:notR}
		\eIf{$\TheOutputOf{\TheElement} \not\subseteq \theta(\TheSafetyFormula)$}{\label{line:notF}{
				$F.\AsFunction{insert}(\TheElement,\TheSafetyFormula)$\;
			\Return $(false,R_0, F)$\;
		}
  	 }
  	 {
   	$R.\AsFunction{insert}(\TheElement, \TheSafetyFormula)$\; 
   	\For{$\TheInput \in \TheSetOfInputs$}{
		\If {$\crtu(\set{(\delta(\TheElement,\TheInput), \delta(\TheSafetyFormula,\TheInput))})\cap F \neq \emptyset$}{
		\label{line:nextBad}
		$todo.\AsFunction{prepend}((c_{ex}:\TheInput),\delta(\TheElement)(\TheInput), \delta(\TheSafetyFormula)(\TheInput))$\;
			 \Break \;
		}	 
		\If{$(\delta(\TheElement)(\TheInput), \delta(\TheSafetyFormula)(\TheInput))\not\in \crtu(R)$}{\label{line:nextOk}
			$todo.\AsFunction{append}(\delta(\TheElement)(\TheInput), \delta(\TheSafetyFormula)(\TheInput))$\;
		}
	}
	}
   	}
}
 \Return $(true,R, F_0)$\;\,
 \caption{The $\Verify$ algorithm. 
 }}
 \label{alg:Verification}
\end{algorithm}

$\Verify$ relies on the same principles as the $\AsFunction{HKC}$ algorithm \cite{Bonchi:2013:CNE:2480359.2429124} (a version of Hopcroft and Karp's algorithm \cite{HopcroftAndKarp} that uses enhanced coinduction to decide language equivalence of finite automata). Our use of enhanced coinduction is marked by the appearance of the \emph{precongruence closure} function $\crtu\colon \ThePowersetOf{\TheSet\times\TheSetOfClosedAndGuardedSafetyFormulae}\rightarrow\ThePowersetOf{\TheSet\times\TheSetOfClosedAndGuardedSafetyFormulae}$ in Lines~\ref{line:inF},~\ref{line:notR}
,~\ref{line:nextBad}, and~\ref{line:nextOk}. 

Using precongruence closures, we can build \emph{simulation up-to-precongruences} that speed up the verification process by helping in the inference of satisfaction/failure of the formulae we are checking. To define these closures, it is desirable to have a set of \emph{algebraic operators}. In a nutshell, algebraic operators are functions that preserve the satisfaction relation between states and safety formulae.

\begin{definition}[Algebraic Operator]
Let $\TheCoalgebra=(\TheSet,\theta,\delta)$ be an $\TheFunctor$-coalgebra; we say that a function $\beta\colon \TheSet\times \TheSetOfClosedAndGuardedSafetyFormulae\rightarrow\TheSet\times \TheSetOfClosedAndGuardedSafetyFormulae$ is an \emph{algebraic operator} (in the context of $\TheCoalgebra$) if and only if $\beta$ is monotone with respect to $\vDash$. 
 \end{definition}
Besides the obvious identity operator, the following operators exist for any $\TheFunctor$-coalgebra. Let $\TheCoalgebra=(\TheSet,\theta,\delta)$ be an $\TheFunctor$-coalgebra. For all $\TheInput\in \TheSetOfInputs$, we the operator $\xrightarrow{i}\colon \TheSet\times\TheSetOfClosedAndGuardedSafetyFormulae\rightarrow\TheSet\times\TheSetOfClosedAndGuardedSafetyFormulae$, for $\TheElement\in \TheSet$ and $\TheSafetyFormula\in \TheSetOfClosedAndGuardedSafetyFormulae$, is defined by 
\begin{align}
\xrightarrow{i}(\TheElement,\TheSafetyFormula)\triangleq(\delta(\TheElement)(\TheInput),\delta(\TheSafetyFormula)(\TheInput)\land\TheOutputIs{\theta(\delta(\TheElement)(\TheInput))}).
\end{align}
This construction ``forces'' the behaviour of $\TheElement$ into the formula $\TheSafetyFormula$. If $\delta(\TheSafetyFormula)(\TheInput)\land\TheOutputIs{\theta(\delta(\TheElement)(\TheInput))}$ is $\False$, then $(\delta(\TheElement)(\TheInput),\delta(\TheSafetyFormula)(\TheInput))$ must be a counterexample for $(\TheElement,\TheSafetyFormula)$, since $\theta(\delta(\TheElement)(\TheInput))$ cannot be a subset of $\theta(\delta(\TheSafetyFormula)(\TheInput))$. This condition bears a remarkable similarity to the acceptance conditions used in the explicit state model checking, which uses product automata with a special acceptance condition \cite{HandbookOfModelChecking}.

We can finally introduce the precongruence closure operator $\crtu$, which enhances contextual closures with transitivity-and-union (of simulation relations).
\begin{definition}[Precongruence Closure]
\label{def:crtu}
Let $(\TheSet,\theta,\delta)$ be an $\TheFunctor$-coalgebra, let $\leq\ \subseteq \TheSet\times\TheSet$ be a simulation relation in $\TheSet$ and let $\leadsto\ \subseteq\TheSetOfClosedAndGuardedSafetyFormulae\times\TheSetOfClosedAndGuardedSafetyFormulae$ be a simulation in $\TheSetOfClosedAndGuardedSafetyFormulae$.
 Given a relation $\TheRelation \subseteq \TheSet\times\TheSetOfClosedAndGuardedSafetyFormulae$, we inductively define the \emph{precongruence closure} $\crtu(R)$ of $R$ using the following inference rules:
\begin{align*}
{\inference[id: ]{\TheElement~R~\TheFormula}{\TheElement~\crtu(R)~\TheFormula}},
\quad\quad\quad\quad{\inference[$c_{\beta}$: ]{\TheElement~R~\TheFormula,~\beta(\TheElement,\TheFormula)=(\TheElement',\TheFormula')}{\TheElement'~\crtu(R)~\TheFormula'}},\quad
\end{align*}
\begin{align*}
{\inference[$tu_\leq$: ]{x\leq y,~y~\crtu(R)~\TheFormula}{x~\crtu(R)~\TheFormula}},\quad
{\inference[$tu_\Rightarrow$: ]{\TheElement~\crtu(R)~\TheFormula,~\TheFormula~\leadsto~\TheFormula'}{\TheElement~\crtu(R)~\TheFormula'}},\end{align*}
(Since we are only interested in element-formula pairs for model checking, we omit including reflexive and symmetry rules in the precongruence closure.)

If $\crtu(R)$ is a simulation between $\TheSet$ and $\TheSetOfClosedAndGuardedSafetyFormulae$, then we say that $\TheRelation$ is a \emph{simulation up-to-precongruence.}
\end{definition}
\begin{proposition}[Satisfaction Through Enhancement]
\label{prop:crtu}
Let $\TheCoalgebra=(\TheSet,\theta,\delta)$ be an $\TheFunctor$-coalgebra, $\TheElement\in\TheSet$ be a state, and $\TheFormula \in \TheSetOfClosedAndGuardedSafetyFormulae$ be a safety formula; if $(\TheCoalgebra,\TheElement)\vDash\TheFormula$, then $(\TheCoalgebra,\TheElement')\vDash\TheFormula'$ for all $(\TheElement',\TheFormula')\in\crtu\left(\set{(\TheElement,\TheFormula)}\right)$.
\end{proposition}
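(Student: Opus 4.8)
The plan is to prove, by induction on the derivation that witnesses $(\TheElement',\TheFormula')\in\crtu(\set{(\TheElement,\TheFormula)})$ according to the rules \AsRule{id}, $c_\beta$, $tu_\leq$ and $tu_\Rightarrow$ of Definition~\ref{def:crtu}, the invariant that every pair $(\TheElement',\TheFormula')$ in $\crtu(\set{(\TheElement,\TheFormula)})$ satisfies $(\TheCoalgebra,\TheElement')\vDash\TheFormula'$. By Definition~\ref{def:Satisfaction} I would read $(\TheCoalgebra,\TheElement')\vDash\TheFormula'$ as $\TheElement'\lesssim\TheFormula'$, i.e.\ membership of $(\TheElement',\TheFormula')$ in the greatest simulation from $\TheCoalgebra$ to $\zeta$; under this reading the invariant is exactly the inclusion $\crtu(\set{(\TheElement,\TheFormula)})\subseteq\ \lesssim$. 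The two facts I would lean on are: the relational composite of two simulations is again a simulation (as already invoked in Remark~\ref{rem:FormulaeImplication}), and, since $\lesssim$ is the \emph{greatest} simulation from $\TheCoalgebra$ to $\zeta$, any simulation from $\TheCoalgebra$ to $\zeta$ is contained in $\lesssim$.

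For the base cases, note that $R=\set{(\TheElement,\TheFormula)}$ has $(\TheElement,\TheFormula)$ as its only member, so rules \AsRule{id} and $c_\beta$, whose premises mention only $R$, can be fired only on $(\TheElement,\TheFormula)$. Rule \AsRule{id} returns $(\TheElement,\TheFormula)$, which meets the invariant by the hypothesis $(\TheCoalgebra,\TheElement)\vDash\TheFormula$. Rule $c_\beta$ returns $\beta(\TheElement,\TheFormula)$ for an algebraic operator $\beta$; being an algebraic operator is precisely the property that $\beta$ carries $\vDash$ into $\vDash$, so from $(\TheCoalgebra,\TheElement)\vDash\TheFormula$ we get that $\beta(\TheElement,\TheFormula)$ is again a satisfying pair. (Since \AsRule{id} is the $c_\beta$-instance for the identity operator, strictly only the $c_\beta$ case is needed; and the same two observations show $\crtu(R)\subseteq\ \vDash$ for \emph{any} $R\subseteq\ \vDash$, of which the proposition is the singleton instance.)

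For the inductive step, take $tu_\leq$ first: the new pair $(\TheElement',\TheFormula')$ arises from $\TheElement'\leq y$ and $(y,\TheFormula')\in\crtu(R)$, where the induction hypothesis gives $y\lesssim\TheFormula'$. Composing the state-simulation $\leq$ (on the domain side) with $\lesssim$ produces a simulation from $\TheCoalgebra$ to $\zeta$ that contains $(\TheElement',\TheFormula')$; by maximality of $\lesssim$ this forces $\TheElement'\lesssim\TheFormula'$, i.e.\ $(\TheCoalgebra,\TheElement')\vDash\TheFormula'$. For $tu_\Rightarrow$: the new pair $(\TheElement',\TheFormula')$ arises from $(\TheElement',\psi)\in\crtu(R)$ and $\psi\leadsto\TheFormula'$, with $\TheElement'\lesssim\psi$ by induction hypothesis; composing $\lesssim$ with the formula-simulation $\leadsto$ (on the codomain side) again yields a simulation from $\TheCoalgebra$ to $\zeta$ containing $(\TheElement',\TheFormula')$, and maximality of $\lesssim$ concludes as before. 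Every $\crtu$-derivation is an \AsRule{id}/$c_\beta$ leaf followed by finitely many $tu_\leq$/$tu_\Rightarrow$ steps, so these cases exhaust the induction.

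I expect the main obstacle to be nothing more than bookkeeping in the $tu$ cases: one must keep the two composition directions straight --- $\leq$ precomposes on the state side, $\leadsto$ postcomposes on the formula side --- and remember that $\leq$ and $\leadsto$ are \emph{arbitrary} simulations, not similarities, so the argument has to route through $\lesssim$ via ``simulations compose and similarity is the greatest'' rather than by a naive transitivity appeal. It is also worth stating explicitly that ``$\beta$ monotone with respect to $\vDash$'' is being used in the form ``if $(\TheCoalgebra,\TheElement)\vDash\TheFormula$ then $\beta(\TheElement,\TheFormula)$ is a satisfying pair'', which is exactly what the $c_\beta$ step consumes. Beyond this, no step is expected to be hard.
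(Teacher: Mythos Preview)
Your proposal is correct and follows essentially the same approach as the paper: a case analysis on the inference rules of Definition~\ref{def:crtu}, handling \AsRule{id} by hypothesis, $c_\beta$ by monotonicity of algebraic operators with respect to $\vDash$, and the $tu$ rules by the facts that simulations compose and that $\lesssim$ is the greatest simulation from $\TheCoalgebra$ to $\zeta$. Your version is in fact slightly more careful than the paper's, since you make explicit the induction on derivations needed for the recursive premises in $tu_\leq$ and $tu_\Rightarrow$, whereas the paper treats the rules as if they applied only once.
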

\begin{proof}
We prove that $(\TheCoalgebra,\TheElement')\vDash\TheFormula'$ for the different cases of $(\TheElement',\TheFormula')$:
\begin{itemize}
\item $id$: if $(\TheElement',\TheFormula')=(\TheElement,\TheFormula)$, then $(\TheCoalgebra,\TheElement')\vDash\TheFormula'$ trivially holds since $(\TheCoalgebra,\TheElement)\vDash\TheFormula$
\item $c_\beta$: if $(\TheElement',\TheFormula')=\beta(\TheElement,\TheFormula)$, then $(\TheCoalgebra,\TheElement')\vDash\TheFormula'$ holds because $\beta$ is monotone with respect to $\vDash$.
\item $tu$: if $\TheElement'~S~\TheElement$ and $\TheFormula\Rightarrow \TheFormula'$, then $(\TheCoalgebra,\TheElement')\vDash\TheFormula'$ since the composition of simulation relations is a simulation relation, and all simulation relations from $\TheCoalgebra$ to $\zeta$ are subsets of $\vDash$.
\item $ctu$: if $\TheElement'~S~\beta(\TheElement)$ and $\gamma(\TheFormula)\Rightarrow \TheFormula'$, then $(\TheCoalgebra,\TheElement')\vDash\TheFormula'$ follows from a similar argument.\qed
\end{itemize}
\end{proof}

\begin{corollary}[Counterexample Through Enhancement]
\label{cor:InferCounterexamples}
If there exists a pair $(\TheElement', \TheFormula')$ in $\crtu\left(\set{(\TheElement,\TheFormula)}\right)$ such that $(\TheCoalgebra,\TheElement')\not\vDash\TheFormula'$, then $(\TheCoalgebra,\TheElement)\not\vDash\TheFormula$. Hence, $(\TheElement', \TheFormula')$ is a counterexample for $(\TheCoalgebra,\TheElement)\vDash\TheFormula$.
\end{corollary}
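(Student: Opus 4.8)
The plan is to obtain this as the contrapositive of Proposition~\ref{prop:crtu}, followed by an unfolding of Definition~\ref{def:Counterexamples}; there is essentially no new content to prove.

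First I would fix an arbitrary pair $(\TheElement',\TheFormula')\in\crtu\left(\set{(\TheElement,\TheFormula)}\right)$ and recall that Proposition~\ref{prop:crtu}, applied to the relation $\set{(\TheElement,\TheFormula)}$, asserts exactly that $(\TheCoalgebra,\TheElement)\vDash\TheFormula$ implies $(\TheCoalgebra,\TheElement')\vDash\TheFormula'$. Contraposing this gives: $(\TheCoalgebra,\TheElement')\not\vDash\TheFormula'$ implies $(\TheCoalgebra,\TheElement)\not\vDash\TheFormula$. Note that this implication holds for \emph{every} member of $\crtu\left(\set{(\TheElement,\TheFormula)}\right)$, independently of the corollary's hypothesis. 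Invoking the hypothesis now, there is a pair $(\TheElement',\TheFormula')$ in the closure with $(\TheCoalgebra,\TheElement')\not\vDash\TheFormula'$, and combining this with the implication just derived yields $(\TheCoalgebra,\TheElement)\not\vDash\TheFormula$, which is the first assertion of the corollary.

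For the final sentence, I would check the two conditions of Definition~\ref{def:Counterexamples} for the pair $(\TheElement',\TheFormula')$ relative to $(\TheCoalgebra,\TheElement)\vDash\TheFormula$: condition (i), that $(\TheCoalgebra,\TheElement')\not\vDash\TheFormula'$, is precisely the hypothesis; condition (ii), that $(\TheCoalgebra,\TheElement')\not\vDash\TheFormula'$ implies $(\TheCoalgebra,\TheElement)\not\vDash\TheFormula$, is the contraposed implication established above. Both hold, so $(\TheElement',\TheFormula')$ is a counterexample for $(\TheCoalgebra,\TheElement)\vDash\TheFormula$.

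The only point that needs a moment's care — and it is not really an obstacle — is matching the argument of $\crtu$: Proposition~\ref{prop:crtu} is stated for the closure of the singleton $\set{(\TheElement,\TheFormula)}$, which is exactly the relation appearing in the corollary, so no auxiliary lemma about monotonicity of $\crtu$ in its relational argument is required. Everything coinductive has already been discharged in the proof of Proposition~\ref{prop:crtu}, and what remains is a purely logical rearrangement.
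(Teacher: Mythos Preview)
Your proposal is correct and matches the paper's own proof, which consists of the single line ``By a contrapositive argument of Proposition~\ref{prop:crtu}.'' You have simply unpacked that contrapositive explicitly and additionally verified the two clauses of Definition~\ref{def:Counterexamples}, which the paper leaves implicit.
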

\begin{proof}
By a contrapositive argument of Proposition~\ref{prop:crtu}.\qed
\end{proof}
We use Corollary~\ref{cor:InferCounterexamples} in Line~\ref{line:inF} of the $\Verify$ algorithm to infer violation of properties, considering that $F$, for each pair $(\TheElement,\TheFormula)$ in $F$, we have $(\TheCoalgebra,\TheElement)\not\vDash\TheFormula$.

The $\Verify$ algorithm heavily relies on Proposition~\ref{prop:crtu} and Corollary~\ref{cor:InferCounterexamples} for its correctness: at its core, $\Verify$ builds a relation $R$, and treats it as if it was a simulation up-to precongruence, until we either find a counterexample or we finish, and we can conclude that all elements in $R$ are valid state-formula pairs. 

$\Verify$ explores a space of size $|\TheElement_0||\TheSafetyFormula_0|$, where $|\TheElement_0|$ is the number of states that are reachable from $\TheElement_0$. For each pair $(\TheElement, \TheSafetyFormula)$ that has not been visited, the algorithm has to check whether $\theta(\TheElement)\subseteq \theta(\TheSafetyFormula)$, which is lineal in the size of $\TheSetOfOutputs$. Without any optimisation, $\Verify$ would have a worst time complexity of $\mathcal{O}(|O||\TheSet||\TheSafetyFormula_0|)$. Moreover, if we have $n$ formulae to check, the worst case complexity of the attacker quantification problem is $\mathcal{O}(n|O||\TheSet||\TheSafetyFormula_{max}|)$ where $|\TheSafetyFormula_{max}|$ is the biggest formula. Our intention is to reduce the complexity in two fronts: one, to change the worst case complexity to $\mathcal{O}(a|O||\TheSet||\TheSafetyFormula_{max}|+ b|R| +c|F|)$ by allowing the inference of, ideally with $\mathcal{O}(a)<\mathcal{O}(n)$, and second, reduce the complexity of a single execution of $\Verify$ from $\mathcal{O}(|O||\TheSet||\TheSafetyFormula_0|)$ to $\mathcal{O}(k|O||\TheSet|+j|R|)$, ideally with $\mathcal{O}(k)<\mathcal{O}(|\TheSafetyFormula_0|)$. We remark that the use of algebraic operators does not guarantee a reduction in the overall complexity of the algorithm, and may even cause unnecessary overhead; however, our experiments show that we can achieve reductions in the size of the state space by using algebraic operators.

We now recall the research questions that motivate this work: \textbf{(RQ1)} how can we reduce the effort of verifying multiple properties for the quantification of attacker capabilities? and \textbf{(RQ2)} how do we quantify the attackers using the verification results?

During the remainder of this section, we answer RQ1 by performing experiments in the combination lock and a BEEM (Benchmarks for Explicit Model Checkers) problem \cite{BeemDatabase}. More precisely, we use different sets of algebraic operators to reduce the number of states that the $\Verify$ algorithm has to explore or reduce the number of properties hat have to be verified. We show an answer for RQ2 in Section \ref{sec:CaseStudy}.

Let us consider two algebraic operators for the combination lock $\mathbb{C}$: an operator $\shift$ that shifts the components of the dial to the right, and an operator $\addIt$ that adds 1 to the rightmost component of the dial. Formally, we define $\shift:\AsSet{10^4}\times\TheSetOfClosedAndGuardedSafetyFormulae\rightarrow \AsSet{10^4}\times\TheSetOfClosedAndGuardedSafetyFormulae$, for $(\TheElement_0,\TheElement_1,\TheElement_2,\TheElement_3)\in\AsSet{10^4}$ and $\TheFormula_1, \TheFormula_2\in\TheSetOfClosedAndGuardedSafetyFormulae$, in a component wise manner (i.e., $\shift=(\shift_1, \shift_2)$) by 
\begin{align*}
\shift_1(\TheElement_0,\TheElement_1,\TheElement_2,\TheElement_3)&\triangleq(\TheElement_3,\TheElement_0,\TheElement_1,\TheElement_2);
\end{align*}
\vspace{-0.86cm}
\begin{align*}
\shift_2(\TheOutputIs{\OtherPredicate})&\triangleq\TheOutputIs{\shift_1(\OtherPredicate)},&\shift_2(\TheFormula_1 \land \TheFormula_2)&\triangleq\shift_2(\TheFormula_1)\land \shift_2(\TheFormula_2)\\
\shift_2(\After{\ThePredicate}\TheFormula)&\triangleq\After{\ThePredicate}\left(\shift_2(\TheFormula)\right),&
\shift_2(\nu\TheVariable.\TheGuardedFormula)&\triangleq\shift_2(\TheGuardedFormula\left[\nu\TheVariable.\TheGuardedFormula/\TheVariable\right]),\\
\end{align*}
where $\shift_1(Q)\triangleq\set{\shift_1(\vect{q})|\vect{q}\in Q}$. Similarly, we define $\addIt:\AsSet{10^4}\times\TheSetOfClosedAndGuardedSafetyFormulae\rightarrow \AsSet{10^4}\times\TheSetOfClosedAndGuardedSafetyFormulae$, for $(\TheElement_0,\TheElement_1,\TheElement_2,\TheElement_3)\in\AsSet{10^4}$ and $\TheFormula_1, \TheFormula_2\in\TheSetOfClosedAndGuardedSafetyFormulae$, by 
\begin{align*}
\addIt_1(\TheElement_0,\TheElement_1,\TheElement_2,\TheElement_3)&\triangleq(\TheElement_0,\TheElement_1,\TheElement_2,\TheElement_3+1~(mod~10));
\end{align*}
\vspace{-0.86cm}
\begin{align*}
\addIt_2(\TheOutputIs{\OtherPredicate})&\triangleq\TheOutputIs{\addIt_1(\OtherPredicate)},&\addIt_2(\TheFormula_1 \land \TheFormula_2)&\triangleq\addIt_2(\TheFormula_1)\land \addIt_2(\TheFormula_2)\\
\addIt_2(\After{\ThePredicate}\TheFormula)&\triangleq\After{\ThePredicate}\left(\addIt_2(\TheFormula)\right),&
\addIt_2(\nu\TheVariable.\TheGuardedFormula)&\triangleq\addIt_2(\TheGuardedFormula\left[\nu\TheVariable.\TheGuardedFormula/\TheVariable\right]),
\vspace{-0.86cm}
\end{align*}
where $\addIt_1(Q)\triangleq\set{\addIt_1(\vect{q})|\vect{q}\in Q}$. We also define $\shift^2$, $\shift^3$, and $\addIt^2$ to $\addIt^9$ by composing the operators with themselves. 

To provide an intuition of how formulae can be inferred from previous verification experiences, let us describe the first steps of a verification procedure that uses the operators $\shift$ and $\addIt$. We start with $R=\set{}$ and $F=\set{}$, then, for $(0000,\Always\TheOutputIs{\cdot \neq {0000}})$, $\Verify$ concludes that $(0000,\Always\TheOutputIs{\cdot \neq {0000}})$ is a counterexample, and it returns $(false,\set{},\set{(0000,\Always\TheOutputIs{\cdot \neq {0000}})})$. We can henceforth infer that any pair $(\TheElement, \TheFormula)$ such that $(0000,\Always\TheOutputIs{\cdot \neq {0000}})\in \crtu(\set{(\TheElement, \TheFormula)})$ is a counterexample for the formula that we are currently checking. Note that, since we only considered the $\shift$ and $\addIt$ operators, this condition holds only for $(0009,\Always\TheOutputIs{\cdot \neq {0009}})$; nevertheless, we could extend this condition to all pairs of the form $(000x,\Always\TheOutputIs{\cdot \neq {000x}})$ if we considered more $\addIt^i$ operators.

We continue verifying with the knowledge we have so far: with $R=\set{}$ and $F=\set{(0000,\Always\TheOutputIs{\cdot \neq {0000}})}$, for $(0000,\Always\TheOutputIs{\cdot \neq {0001}})$, $\Verify$ returns
 \[(false,\set{},\set{(0000,\Always\TheOutputIs{\cdot \neq {0000}}),(0000,\Always\TheOutputIs{\cdot \neq {0001}})})\]
Henceforth, for any $(\TheElement, \TheFormula)$ with $(0000,\Always\TheOutputIs{\cdot \neq {0001}})\in \crtu(\set{(\TheElement, \TheFormula)})$, we can infer that $(\TheElement, \TheFormula)$ is a counterexample for the formula we are currently checking. For example, the verification of $\Always\TheOutputIs{\cdot \neq {0010}}$ becomes ``trivial,'' since $\Verify$ infers its failure at the initial state. 

We observe that use of algebraic operators during verification can make some properties trivial to verify, since their validity can be inferred at the initial state. Table~\ref{tab:CombinationLock} shows the number of properties that become trivial given different combinations of operators. We see that the more operators we consider, the more likely it is for $\Verify$ to infer properties.
\begin{table}[t]
\centering
\begin{tabular}{|c|c|| c|c|}
\hline
Operators& Inferred &Operators& Inferred \\
\hline
 $\set{\shift}$  & 3675 & $\set{\shift,\shift^2,\shift^3}$& 7470\\
 \hline
$\set{\addIt}$& 5000 & $\set{\shift,\shift^2,\shift^3,\addIt}$& 7470\\
\hline
 $\set{\shift,\addIt}$  & 5925 &$\set{\shift,\shift^2,\shift^3,\addIt,\addIt^2}$ &7550\\
 \hline
 $\set{\shift,\shift^2}$  & 4995 & $\set{\shift,\shift^2,\shift^3,\addIt,\addIt^2,...,\addIt^9}$& 9046\\
 \hline
\end{tabular}
\caption{Inferred properties given a set of algebraic operators (out of 10000) in the combination lock example.}
\label{tab:CombinationLock}
\vspace{-0.5cm}
\end{table}

In the bounded concurrent adding puzzle (from \cite{BeemDatabase}), there are two processes \texttt{P} and  \texttt{Q} running in parallel, and they write over shared accumulator $c$, with \texttt{P =
    loop \{x=c; x=x+c; c=x;\}} and \texttt{Q = loop \{y=c; y=y+c; c=y;\}}. The initial value of $c$ is $1$, and the claim is that $c$ can eventually take any given natural value. With $\TheSetOfInputs = \set{1,2}$, and $\TheSetOfOutputs=\mathbb{N}$, and $\TheSet=\set{Q,R,S}\times\mathbb{N}$, we model the addition puzzle problem $\TheCoalgebra=(\TheSet\times\TheSet\times\mathbb{N}, \Theta,\Delta)$ with an interleaving product of two coalgebras defined by
\begin{align*}
\theta(\TheElement,n)&\triangleq n,\\
\delta(\TheElement,n)(c)&\triangleq 
\begin{cases}
(R,c) &\text{if $\TheElement=Q$ and $c<MAX$}\\
(S,n+c), &\text{if $\TheElement=R$}\\
(Q,n), &\text{if $\TheElement=S$}\\
(\TheElement,n), &\text{otherwise}.
\end{cases}\\
\Theta((\TheElement_1,n_1),(\TheElement_2,n_2),c)&\triangleq \set{c},\\
\Delta((\TheElement_1,n_1),(\TheElement_2,n_2),c)(\TheInput)&\triangleq 
\begin{cases}
\Delta(\delta(\TheElement_1,n_1)(c),(\TheElement_2,n_2),c),\quad &\text{if $\TheInput=1$ and $\TheElement_1\neq S$};\\
\Delta(\delta(\TheElement_1,n_1)(c),(\TheElement_2,n_2),n_1),\quad &\text{if $\TheInput=1$ and $\TheElement_1= S$};\\
\Delta((\TheElement_1,n_1),\delta(\TheElement_2,n_2)(c),c),\quad &\text{if $\TheInput=2$ and $\TheElement_2\neq S$};\\
\Delta((\TheElement_1,n_1),\delta(\TheElement_2,n_2)(c),n_2),\quad &\text{if $\TheInput=2$ and $\TheElement_2 = S$};
\end{cases}
\end{align*}
For $c$ with $1\leq c \leq MAX$, we prove that the initial state, $((Q,0),(Q,0),1)$, satisfies $\Eventually \TheOutputIs{\cdot = c}$, which is equivalent to solving the puzzle for $c$.

We now define the $\mathtt{swap}$ operator that swaps both processes by 
\begin{align}
\mathtt{swap}(((\TheElement_1,n_1),(\TheElement_2,n_2),c),\TheFormula) &\triangleq ((\TheElement_2,n_2), (\TheElement_1,n_1),c), \TheFormula).
\end{align}
Table \ref{tab:Addition} shows the impact of using $\mathtt{swap}$ during verification. We observe that the $\mathtt{swap}$ operator consistently reduces state exploration by almost half. 
\begin{table}[t]
\centering
\begin{tabular}{|c|c|c||r|r|r|r|}
\hline
$(N,MAX)$ &  {None}&   {$\mathtt{swap}$}&  {None}&  {$\mathtt{swap}$} \\
\hline
(17, 20)	& 1616 &  845    &0.76 s &0.40 s\\
\hline
(500, 200)	& 14298 &  7937   &6.63 s   & 	3.82 s \\
\hline
(637, 300)	&485942 &  247602 & 252.01 s &145.84 s\\
\hline
(749, 400)	&845020 &   425093  & 446.25 s &267.58 s \\
\hline
\end{tabular}
\caption{States explored and average time for the concurrent addition puzzle.}
\label{tab:Addition}
\vspace{-0.5cm}
\end{table}

\section{A Case Study}
\label{sec:CaseStudy}
To illustrate the problem of attacker quantification through model checking, we provide a small case study based on the water treatment testbed SWaT \cite{SWaT}. SWaT is an operational test bed, built for experimental research in the design and security of industrial control systems ICS in the Singapore University of Technology and Design. SWaT is miniature version of the actual water treatment plants in Singapore, capable of producing 5 gallons/minute of treated water. The SWaT testbed consists of six physical processes controlled by Programming Logic Controllers (PLCs), a human-machine interface, a SCADA system, and a historian. Each physical process has its own sensors and actuators, including pressure meters, flow meters and valves among others.

For this example, we consider a part of process 1 of the water treatment cycle, which is in charge of water collection. We show how we can translate a relationship between physical properties of the system into something that can be used by $\Verify$ to check for properties.

Consider the composition of a controller $C$, a water tank $t101$, a water level sensor $lit101$, a water pressure sensor $hg101$, and an input valve $mv101$. In this process, the water level and water pressure of the system are related by \emph{hydrostatic pressure}: a physical invariant modelled by the equation $hg101\approx \mathtt{g}*lit101$, where $\mathtt{g}$ is the acceleration caused by gravity.

We model the valve $mv101$ as two-state systems whose state represents whether the valve is open or closed. The input to the valve is $c_v\in \set{open,close}$, and sets its next state. There are different cases: if $lit101 <500$, then $c_v=open$; if $lit101 >800$, then $c_v=close$, and $c_v=mv101$ otherwise.

We model the water tank as a system whose state is defined by its current water level, and is bounded by a capacity of $1200$. The input to the tank is its current inflow  $in$, and its state transitions are defined by the inflow-outflow equation $t'=t+in-out$ with $out = 0.44$ (enforcing $0\leq t' \leq 1200$), where $t$ is the current water level of the tank. The input $in$ in turn respectively depends on the state of the valve $mv101$ as follows: {if $mv101=open$, then $in=0.46$; otherwise, $in=0$.

We can model this process as a $\TheFunctor$-coalgebra with $\TheSetOfInputs =\set{\bullet}$ and $\TheSetOfOutputs = L\times P\times \mathbb{B}$, where $L$ is the set of valid water level readings and $P$ is the set of valid water pressure readings. We define the $\TheFunctor$-coalgebra $\mathbb{P}=(L\times L\times P, \theta,\delta)$; a state is a tuple $(t, lit101,hg101)$ where $t$ is the real water level of the tank and $a$ is the state of the alarm in the controller. The functions $\theta$ and $\delta$ are defined by
\begin{align}
\theta(t, lit101,hg101)&\triangleq(t,\mathtt{g}*t,hg101\approx \mathtt{g}*lit101),\\ 
\delta(t,lit101,hg101)&\triangleq(t',t,\mathtt{g}*t)
\end{align}
We model the hydrostatic pressure enforced by physics with the formula
\[Hydro\triangleq \Always (\TheOutputIs{(t,p,a)\approx (t,\mathtt{g}*t,a)}),\]
while the relationship $hg101\approx \mathtt{g}*lit101$ is checked as part of the observation of the state. Whenever this relationship between sensor readings does not hold, we raise an alarm. Now, consider the following safety requirements:
\begin{enumerate}
\item The real water level is always within safe levels,
\item The real water pressure is always within safe levels,
\item Consistent readings between the water level sensor and the pressure sensor,
\end{enumerate}
These three properties can be respectively modelled by the following formulae,
\begin{align}
Lvl&\triangleq\Always (\TheOutputIs{(t,\_~,\_~)\geq 200} \land\TheOutputIs{(t,\_~,\_~)\leq 1000} ),\\
Hg&\triangleq\Always (\TheOutputIs{(\_~,p,\_~)\geq 1000} \land\TheOutputIs{(\_~,p,\_~)\leq 9000} ),\\
Con&\triangleq\Always (\TheOutputIs{(\_~,\_~,a)=True}).
\end{align}
Since these safety formulae have coalgebraic semantics, we can compute the largest similarity relation in the set $\set{Hydro, Lvl, Hg, Con}$ prior to verification, which yields the simulation relation $\set{(Hydro \land Lvl,P)| P \in \set{Lvl, Hg} }$; from this relation, we can infer $Hg$ from $Hydro \land Lvl$ should $Lvl$ be satisfied ($Hydro$ is itself enforced by the laws of physics, so there is no need to check it). Thus, we can sort our proof obligations as $[Lvl, Hg, Con]$, so that, if $Lvl$ holds, then we can infer $Hg$ from the precongruence closure using the rule $tu_\Rightarrow$. 

Consider the following attacks over sensor readings, which are modelled after realistic attacks for CPSs \cite{Cardenas:2011:AAP:1966913.1966959}; similar attacks have been carried out in the real testbed during the SWaT Security Showdown \cite{S3Showdown}:
\begin{enumerate}
\item Set $lit101$ to its maximum value (surge attack),
\item Add a bias $b$ to the current value of $lit101$ (bias attack),
\item Add a bias $b$ to the current value of $lit101$, and produce a consistent reading from $hg101$ (stealthy attack).
\end{enumerate}
We model these attacks respectively with the following functions:
\begin{align}
SurgeUp(t,lit101,hg101)&\triangleq(t,1200,hg101)\\
Bias_b(t,lit101,hg101)&\triangleq(t,lit101+b,hg101)\\
Stealthy_b(t,lit101,hg101)&\triangleq(t,lit101+b, hg101+\mathtt{g}*b)
\end{align}
We use these functions to define new $\TheFunctor$-coalgebras that capture the effect of these attacks. Let $\delta_\alpha\triangleq\delta\circ SurgeUp$, $\delta_\beta\triangleq\delta\circ Bias_{200} $ and $\delta_\gamma\triangleq\delta\circ Stealthy_{500}$
; we preserve the observation function $\theta$ as it is. We will consider three attackers: $\set{\alpha}$, $\set{\beta}$, and $\set{\gamma}$.

Table \ref{tab:Swat} illustrates how many states need exploration when verifying the three properties. We make measurements for the original system and for the system under attacks $\alpha$, $\beta$ and $\gamma$. For this exercise, our initial state is $(500,\mathtt{g}*500,True)$, and we reuse the knowledge of previous verifications (e.g., we use whatever we learn from the verification of $Lvl$ to verify $Hg$ and $Con$).
\begin{table}[t]
\centering
\begin{tabular}{|c|c||c|c|c|}
\hline
Formula  &$(\theta,\delta)$& $(\theta, \delta_\alpha) $ & $(\theta, \delta_\beta)$& $(\theta, \delta_\gamma)$\\
\hline
$Lvl$ & $15757,True$ & $683,False$ & $16199,True$ & $683,False$\\
\hline
$Hg$ & $1,True$ & $683,False$ & $1,True$ & $683,False$\\
\hline
$Con$ &15757,True& $2, False$ & $2, False$ & $1139,True$ \\
\hline
\end{tabular}
\caption{This table shows for a given formula and an attacker model how many states were explored by the verification algorithm and whether the property holds or not for state $(500,\mathtt{g}*500,True)$.}
\label{tab:Swat}
\vspace{-0.5cm}
\end{table}
According to the definition of attacker capabilities from Section \ref{sec:ProblemStatement}, obtain the following results:
\begin{align*}
capabilities(\set{\alpha},\mathbb{P},(500,\mathtt{g}*500,True))&=\set{Lvl,Hg,Con}\\
capabilities(\set{\beta},\mathbb{P},(500,\mathtt{g}*500,True))&=\set{Con}\\
capabilities(\set{\gamma},\mathbb{P},(500,\mathtt{g}*500,True))&=\set{Lvl, Hg}
\end{align*}
These results yield the following attacker hierarchy: $\set{\beta} \leq \set{\alpha}$ and $\set{\gamma}\leq \set{\alpha}$. Attackers $\set{\beta}$ and $\set{\gamma}$ are unrelated.

We might want to apply filters to obtain further interesting classifications. For example, if we are looking for attackers that do not try to avoid detectability, then we focus on attackers that have have $Con$ as part of their their capabilities (in this case, $\set{\alpha}$ and $\set{\beta}$). This result fits the intuition that attacker $\set{\gamma}$ puts effort into creating a consistent reading for both sensors, unlike attackers $\set{\alpha}$ and $\set{\beta}$.  
For another example, if we are concerned about attackers that can impact the physical aspect of the system in a noticeable way (e.g., by causing the water tank to have an unsafe water level), then we look for attackers with $Lvl$ in their set of capabilities (in this case, $\set{\alpha}$ and $\set{\gamma}$). 

\section{Related Work}
Basin and Cremers \cite{KnowYourEnemy} define a unified framework for the definition of attackers and protocols; the authors associate attackers with a set of attacker actions, and then infer the set of attacks available to the attacker from these actions. In their framework, attacks correspond to a sequence of actions that an attacker takes during a protocol execution. We adopt a different approach, and we consider attackers to be sets of functions that change the behaviour of the system (i.e, attacks); a systematic construction of attacks is outside of the scope of this work.

For CPSs, Rothstein et al. \cite{Simei} compare attackers of CPSs using a measure of control over physical properties. In \cite{Simei}, each attacker is given set of actuators and sensors under her control, and by quantifying her interference over a physical variable, they provide a measure of the power of the attacker (for example, attackers that can both empty and overflow a water tank are considered strictly more powerful than attackers that can only empty the tank). 

We can also consider risk assessment to be related to the problem of quantifying the capabilities of attackers. In particular, we would like to highlight the work of Cardenas et al. \cite{Cardenas:2011:AAP:1966913.1966959} in risk assessment for CPSs, where they claim that the most well known risk metric is probably the average loss, which depends on the probability of occurrence of an event and proportional to the estimated loss caused by the event. Intuitively, this claim is not exclusive to CPSs, and there are several ways to mathematically define average loss.

With respect to the verification of multiple properties, we would like to mention the work by Goldberg et al. \cite{JustAssume}, where they consider the problem of efficiently checking a set of safety properties $P_1$ to $P_k$. Their $JA$-verification (short for Just-Assume) consists of individually checking each property, while assuming that all other properties are valid. However, their approach does not guarantee the identification of all failed properties, which is a vital requirement for our attacker capability quantification. We thus needed to adopt an alternative approach.

To the best of our knowledge, our work is the first to use enhanced coinduction for the model checking of multiple properties. Bonchi and Pous \cite{Bonchi:2013:CNE:2480359.2429124} use bisimulation up to congruence to prove language equivalence of non-deterministic finite automata by optimising the classical algorithm by Hopcroft and Karp, and Rot provides in \cite{EnhancedCoinduction} a comprehensive study of enhanced coinduction.

\section{Conclusion}
We illustrated by means of a case study how it is possible to quantify attackers via coalgebraic model checking multiple properties. The case study is based on an existing water treatment testbed, and it shows how attackers can be classified in terms of the properties that they violate.

We show that using enhanced coinduction for the model checking of multiple properties can reduce both the number of properties to be verified and the state space of individual verifications at the linear cost of computing the precongruence closure over the relations used to store knowledge. While our experiments show that it is possible to greatly reduce the number of properties to be verified (e.g., from 10000 to only 954) given an adequate choice of algebraic operators, we remark that using enhanced coinduction does not guarantee a reduction in the effort required to verify multiple properties, since computing the precongruence closure requires an effort linear in the size of the relation that we use to store knowledge. From our experiments, we see a greater reduction in state exploration if the algebraic operators inform the model checking algorithm of any algebraic structure in the state space; for example, if there are symmetries in the state space.
\appendix
\section*{Appendix}
\section{Proof of Proposition~\ref{prop:LogicAndFinality}}
\label{app:ProofLogicAndFinality}
\begin{proof}

Since bisimulations correspond to behavioural equivalence relations in deterministic systems, and since coalgebraic logics respect bisimulations, it suffices to show that the graph of $\TheBehaviourOf{-}$ is a bisimulation relation from $\TheCoalgebra$ to $\Gamma({\TheBehaviourOf{x_0}})$ to prove the proposition.

We define the graph of $\TheBehaviourOf{-}$ by $G(\TheBehaviourOf{-})\triangleq\set{(x,\TheBehaviourOf{\TheElement})|x\in X}$. To show that $\TheBehaviourOf{-}$ is a bisimulation, we must show, for all $(x,\TheBehaviourOf{\TheElement})\in G(\TheBehaviourOf{-})$:
\begin{enumerate}
\item that $(x_0,\TheBehaviourOf{x_0})\in G(\TheBehaviourOf{-})$ (this holds by definition of $\TheBehaviourOf{-}$);
\item that $\TheElement$ and $\TheBehaviourOf{\TheElement}$ have the same observations; i.e., that $\TheOutputOf{\TheElement}\theta_\Gamma(\TheBehaviourOf{\TheElement})$; and
\item that the next states of $\TheElement$ are in relation with the next states of $\TheBehaviourOf{\TheElement}$; i.e., that for all $i\in \TheSetOfInputs$, $(\TheElement^\TheInput, \delta_\Gamma(\TheBehaviourOf{\TheElement},\TheInput))\in G(\TheBehaviourOf{-})$.
\end{enumerate}
For 2.: By the definitions of $\TheBehaviourOf{-}$ and of $\theta_\Gamma$, we see that
\begin{align*}
\theta_\Gamma(\TheBehaviourOf{\TheElement})=\TheBehaviourOf{\TheElement}(\varepsilon)=\TheOutputOf{\TheElement},
\end{align*}
so this condition is satisfied.
\newline
For 3.: by the definitions of $\TheBehaviourOf{-}$ and of $\delta_\Gamma$, we have, for $i\in \TheSetOfInputs$, that
\begin{align*}
\delta_\Gamma(\TheBehaviourOf{\TheElement},\TheInput)(\TheInputSequence)=\TheBehaviourOf{\TheElement}(\TheInput:\TheInputSequence)= \TheBehaviourOf{\TheElement^\TheInput}(\TheInputSequence)
\end{align*}
Since $(\TheElement^\TheInput,\TheBehaviourOf{\TheElement^\TheInput})\in G(\TheBehaviourOf{-})$ for all $\TheInput \in \TheSetOfInputs$, we conclude that $(\TheElement^\TheInput, \delta_\Gamma(\TheBehaviourOf{\TheElement},\TheInput))\in G(\TheBehaviourOf{-})$, so this condition is also satisfied.

As conditions 1., 2., and 3. are satisfied, we infer that $G(\TheBehaviourOf{-})$ is a bisimulation between $\TheCoalgebra$ and $\Gamma({\TheBehaviourOf{x_0}})$. Since coalgebraic logics respect bisimulations, we conclude, for every formula $\TheFormula$, that $(\TheCoalgebra, x_0)\vDash \TheFormula$ if and only if $(\Gamma(\TheBehaviourOf{x_0}), \TheBehaviourOf{x_0})\vDash \TheFormula$.\qed
\end{proof}

\bibliographystyle{lncs/splncs03}

\end{document}